\newif\ifTwoColumn
\newif\ifTechReport

\TechReporttrue    

\ifTwoColumn
\documentclass[journal]{IEEEtran}
\else
\ifTechReport
\documentclass[journal,12pt,onecolumn,draftcls]{IEEEtran}
\setlength{\parskip}{\baselineskip}
\else
\documentclass[journal,11pt,onecolumn,draftclsnofoot]{IEEEtran}
\fi
\fi

\usepackage[utf8]{inputenc} 
\usepackage[T1]{fontenc}    
\usepackage{url}            
\usepackage{booktabs}       
\usepackage{amsfonts,amsmath,amsthm,amssymb,mathrsfs}       
\usepackage{bbm}
\usepackage{cite}
\usepackage{mathtools,xparse}
\usepackage[makeroom]{cancel}
\usepackage[acronym]{glossaries}
\usepackage{color}
\usepackage{balance}
\usepackage{algpseudocode}
\usepackage{subcaption}
\usepackage{nicefrac}       
\usepackage{microtype}      
\usepackage{xcolor}         
\usepackage{graphicx}
\usepackage[prependcaption,colorinlistoftodos]{todonotes}
\usepackage{multirow}
\usepackage{booktabs,tabularx}
\usepackage{cases}
\usepackage[ruled,vlined]{algorithm2e}
\usepackage{varwidth}
\usepackage{tikz}
	\usetikzlibrary{fit,arrows,calc,positioning,shapes.geometric}

\graphicspath{{figures/}}

\renewcommand{\qedsymbol}{$\blacksquare$}

\DeclareMathSizes{10}{10}{6}{4}


\newcommand{\R}{\mathbb{R}}

\newcommand{\N}{\mathbb{N}}

\newcommand{\bbS}{\mathbb{S}}
\newcommand{\mc}[1]{\mathcal{#1}}

\newcommand{\eqdef}{\coloneqq}
\newcommand{\reqdef}{\eqqcolon}

\newcommand{\rank}{\mathrm{rank}}
\newcommand{\vvec}{\mathrm{vec}}

\newcommand{\col}{\mathrm{col}}

\newcommand{\bs}[1]{\boldsymbol{#1}}
\newcommand{\bsone}{\boldsymbol{1}}

\newtheorem{theorem}{Theorem}[section]
\newtheorem{definition}[theorem]{Definition}
\newtheorem{proposition}[theorem]{Proposition}
\newtheorem{lemma}[theorem]{Lemma}

\newtheorem{remark}[theorem]{Remark}
\newtheorem{standing}[theorem]{Standing Assumption}



\newcommand{\normaltext}[1]{\textnormal{#1}}

\makeglossaries
\newacronym{LP}{LP}{linear program}
\newacronym{LTI}{LTI}{linear time-invariant}
\newacronym{CLF}{CLF}{control Lyapunov function}
\newacronym{MPC}{MPC}{model predictive control}
\newacronym{eMPC}{eMPC}{explicit model predictive control}
\newacronym{LQR}{LQR}{linear quadratic regulator}
\newacronym{mp-QP}{mp-QP}{multi-parametric quadratic program}
\newacronym{QP}{QP}{quadratic program}
\newacronym{PWA}{PWA}{piecewise-affine}
\newacronym{MIP}{MIP}{mixed-integer program}
\newacronym{ReLU}{ReLU}{rectified linear unit}
\newacronym{PWA-NN}{PWA-NN}{piecewise-affine neural network}
\newacronym{NN}{NN}{neural network}
\newacronym{MI}{MI}{mixed-integer}
\newacronym{MILP}{MILP}{mixed-integer linear program}
\newacronym{LICQ}{LICQ}{linear independence constraint qualification}
\newacronym{KKT}{KKT}{Karush-Kuhn-Tucker}
\newacronym{ISS}{ISS}{input-to-state stable}

\hyphenation{op-tical net-works semi-conduc-tor}

\tikzstyle{startstop} = [rectangle, rounded corners, minimum width=2.5cm, text width=.8\columnwidth,  minimum height=1cm, text centered, draw=black, fill=red!30]
\tikzstyle{io} = [trapezium, trapezium left angle=70, trapezium right angle=110, minimum width=1.5cm, minimum height=.75cm, text centered, draw=black, fill=blue!30]
\tikzstyle{process} = [rectangle, minimum width=1cm, minimum height=1cm, text width=.8\columnwidth, text centered, draw=black, fill=orange!30]
\tikzstyle{decision} = [diamond, minimum width=3cm, minimum height=3cm, text centered, draw=black, fill=green!30, ]
\tikzstyle{arrow} = [thick,->,>=stealth]
\tikzstyle{c} = [rectangle, draw, inner sep=0.5cm, dashed]
\tikzstyle{line} = [draw, -latex']

\begin{document}

\title{Reliably-stabilizing piecewise-affine\\ neural network controllers}
\author{Filippo Fabiani and Paul J. Goulart
	\thanks{The authors are with the Department of Engineering Science, University of Oxford, OX1 3PJ, United Kingdom ({\tt \{filippo.fabiani, paul.goulart\}@eng.ox.ac.uk}).
		This work was partially supported through the Government’s modern industrial strategy by Innovate UK, part of UK Research and Innovation, under Project LEO (Ref. 104781).}}


\maketitle

\begin{abstract}
	A common problem affecting \gls{NN} approximations of \gls{MPC} policies is the lack of analytical tools to assess the stability of the closed-loop system under the action of the \gls{NN}-based controller. We present a general procedure to quantify the performance of such a controller, or to design minimum complexity \glspl{NN} with \glspl{ReLU} that preserve the desirable properties of a given \gls{MPC} scheme. By quantifying the approximation error between \gls{NN}-based and \gls{MPC}-based state-to-input mappings, we first establish suitable conditions involving two key quantities, the worst-case error and the Lipschitz constant, guaranteeing the stability of the closed-loop system. We then develop an offline, mixed-integer optimization-based method to compute those quantities exactly. Together these techniques provide conditions sufficient to certify the stability and performance of a \gls{ReLU}-based approximation of an \gls{MPC} control law.
\end{abstract}

\begin{IEEEkeywords}
	Model predictive control, Neural networks, Mixed-integer linear optimization.
\end{IEEEkeywords}

\IEEEpeerreviewmaketitle

\glsresetall

\section{Introduction}\label{sec:intro}
\IEEEPARstart{M}{\lowercase{odel predictive control}} (MPC) is one of the most popular control strategies for linear systems with operational and physical constraints \cite{rawlings2017model,borrelli2017predictive} and is based on the repeated solution of constrained optimal control problems. In its \emph{implicit} version, the optimal sequence of control inputs is computed by solving an optimization problem \emph{online} that minimizes some objective function subject to constraints, taking the current state as an initial condition. For certain classes of problems, e.g., when the system dynamics and constraints are affine and the cost function is quadratic (resulting in a \gls{QP}), solutions to these optimization problems can also be pre-computed \emph{offline} using multi-parametric programming \cite{johansen2000explicit,bemporad2002explicit}, with the initial state condition serving as the parameter. This allows one to instead implement an \emph{explicit} version of an MPC policy, i.e., \gls{eMPC}, which amounts to implementing a \gls{PWA} control law. In this latter case, the online computational requirements reduce to identifying over some polyhedral partition of the state-space the region in which the system state resides, e.g., via a binary search tree, and implementing the associated control action.

Although the theory underlying MPC is quite mature and provides practical stability, safety and performance guarantees, implementations of both implicit and explicit MPC suffer from intrinsic practical difficulties. In the case of implicit MPC, the computational effort of solving optimization problems in real-time complicates its application in systems characterized by very high sampling rates \cite{qin2003survey}, which feature prominently in emerging applications in robotics \cite{erez2013integrated,nubert2020safe}, aerial and autonomous vehicles \cite{zhang2016learning,varshney2019deepcontrol,zhu2020safe,richter2018bayesian}, and power electronics \cite{maddalena2021embedded}.

Conversely, \gls{eMPC} requires far less real-time computation, but complications arise when applied to systems with even moderate state dimensions or on embedded systems with modest computational and memory resources \cite{johansen2014toward}. This is because the complexity of the associated \gls{PWA} controller, measured by the number of affine pieces and regions, is known to grow exponentially with the state dimension and number of constraints \cite{alessio2009survey}, making it intractable for large systems. Moreover, generating an associated search tree to determine which region contains the current state may fail or lead to gigantic lookup tables, thus requiring too much processing power or memory storage for online evaluation \cite{kvasnica2011clipping}.

As a result, the idea of \emph{approximating} an MPC policy, using various  techniques, traces back more than 20 years ago \cite{parisini1995receding,bemporad2003suboptimal,jones2009approximate}. The use of (deep) \glspl{NN} \cite{hagan1997neural,Goodfellow-et-al-2016} is particularly attractive in view of their universal approximation capabilities, typically requiring a relatively small number of parameters \cite{hornik1989multilayer}. Despite their computationally demanding offline training requirements, the online evaluation of \gls{NN}-based approximations to MPC laws is computationally very inexpensive, since it only requires the evaluation of an input-output mapping \cite{duarte2018fast,zhang2019real,schindler2020real}.
However, unless one assumes a certain structure, deep \glspl{NN} are generally hard to analyze due to their nonlinear and large-scale structure \cite{hagan1997neural,Goodfellow-et-al-2016}.

\subsection{Related works}
For the aforementioned reasons, along with the growing interest in data-driven control techniques generally, interest in \gls{NN}-based approximations of MPC laws has increased rapidly in recent years \cite{chen2018approximating,hertneck2018learning,karg2020efficient,karg2020stability,zhang2020near,maddalena2020neural,paulson2020approximate}. Starting from the pioneering work in \cite{parisini1995receding}, where a \gls{NN} with one hidden layer was adopted to learn a constraint-free nonlinear MPC policy through fully supervised learning, \cite{chen2018approximating} proposed to train a \gls{NN} with \glspl{ReLU} using a reinforcement learning method, resulting in an efficient and computationally tractable training phase. In \cite{karg2020efficient} it was shown that \gls{ReLU} networks can encode exactly the \gls{PWA} mapping resulting from the formulation of an MPC policy for \gls{LTI} systems, with theoretical bounds on the number of hidden layers and neurons required for such an exact representation.  However, these approaches do not come equipped with certificates of closed-loop stability.

Conversely, the approach in \cite{karg2020stability} performed the output reachability analysis of a \gls{ReLU}-based controller via a collection of \gls{MILP} formulations (one for each control and state constraint) to establish closed-loop asymptotic stability requirements involving the underlying \gls{NN}.
In \cite{hertneck2018learning} a robust MPC for a deterministic, nonlinear system was first designed to tolerate inaccurate input approximations up to a certain tolerance, and then a \gls{NN} was trained to mimic such a robust scheme. By leveraging statistical methods, probabilistic stability and constraint satisfaction guarantees for the closed-loop system were then shown to be possible. Along the same lines, \cite{zhang2020near} focused on linear parameter-varying systems for which a \gls{NN}-based approximation of an MPC policy enjoyed probabilistic guarantees of feasibility and near-optimality. By considering \gls{LTI} systems with an additive source of uncertainty, \cite{paulson2020approximate} proposed to approximate a robust MPC scheme with a \gls{NN} and then project the output of such an approximation into a suitably chosen set, in order to guarantee robust constraint satisfaction and stability of the closed-loop system. Finally, \cite{maddalena2020neural} developed a method to fit samples from an MPC law by means of a tailored \gls{NN} built as a composition of linear maps and optimization problems. While the stability analysis of the \gls{LTI} system with such a peculiar \gls{NN} approximation can be carried out via sum-of-squares programming, the controller deployment requires either one forward pass in the \gls{NN} or the offline computation of two additional \gls{PWA} mappings.

\subsection{Summary of contributions}
In contrast to the existing literature, we provide a means to assess the training quality of a \gls{ReLU} network in replicating the action of an MPC policy. Specifically, we establish a systematic, \gls{MI} optimization-based procedure that allows us to certify the reliability, in terms of stability and performance of the closed-loop system, of a \gls{ReLU}-based approximation of an MPC law. In summary, we make the following contributions:

\begin{itemize}
	\item By considering the approximation error between a \gls{ReLU} network and an MPC law, we give sufficient conditions involving the maximal approximation error and the associated Lipschitz constant that guarantee the closed-loop stability of a discrete-time \gls{LTI} system when the \gls{ReLU} approximation replaces the MPC law;
	\item We formulate an \gls{MILP} to compute the Lipschitz constant of an MPC policy exactly. This result is of standalone interest, as well as being instrumental for the main result;
	\item We develop an optimization-based technique to allow the exact computation of the worst-case approximation error and the Lipschitz constant characterizing the approximation error.  The outcome is a set of conditions involving the optimal value of two \glspl{MILP} that are sufficient to allow us to certify the reliability of the \gls{ReLU}-based controller;
	\item We suggest several ways to employ our results in practice.
\end{itemize}

This work represents a first step towards a unifying theoretical framework for analyzing NN-based approximations of MPC policies, since we are able to assess the closed-loop stability of an \gls{LTI} system with a \gls{PWA-NN} controller that results from the training of the network to mimic a given MPC law. This approach was not proposed in any of the relevant work on \gls{NN}-based control  \cite{parisini1995receding,chen2018approximating,hertneck2018learning,karg2020efficient,karg2020stability,zhang2020near,maddalena2020neural,paulson2020approximate}.
Moreover, compared to more traditional suboptimal MPC schemes \cite{bemporad2003suboptimal,zeilinger2011real,necoara2013linear,giselsson2013feasibility,rubagotti2014stabilizing}, the design of minimum complexity \gls{ReLU}-based approximations  moves the computational requirements completely offline, i.e., the training phase, strategies for input constraints satisfaction and certificates verification, and does not need any artificial problem modifications that may degrade control performance, e.g., constraints tightening. For these reasons, \gls{PWA-NN} controllers are suitable candidates to maintain the optimality features of an MPC scheme with inexpensive online evaluation \cite{duarte2018fast,zhang2019real,schindler2020real}.

\subsection{Paper organization}
In \S \ref{sec:problem_description} we present the approximation problem addressed in the paper, whereas in \S \ref{sec:stability} we establish closed-loop stability criteria to motivate the interest in some quantities characterizing the approximation error. Successively, \S \ref{sec:background} introduces some mathematical ingredients needed for our treatment, while \S \ref{sec:maximal_gain} is devoted to establish a preliminary result involving the Lipschitz constant computation of an MPC policy. In \S \ref{sec:training_quality}, we give the main result characterizing the exact computation of the key quantities discussed in \S \ref{sec:stability}, while \S \ref{sec:discussion} reports a general procedure suggesting how to use of our results. We accompany this section with a discussion of practical aspects, including accommodation of input constraints, and we finally verify our theoretical results via a numerical example involving the stabilization of a system of coupled oscillators in \S \ref{sec:simulations}.

\subsection*{Notation}
$\N$, $\R$ and $\R_{\geq 0}$ denote the set of natural, real and nonnegative real numbers, respectively. $\N_0 \eqdef \N \cup \{0\}$, $\N_\infty \eqdef \N \cup \{+\infty\}$, while $\mathbb{B} \eqdef \{0,1\}$. $\bbS^{n}$ is the space of $n \times n$ symmetric matrices and $\bbS_{\succcurlyeq 0}^{n}$ is the cone of positive semi-definite matrices. A vector with all elements equal to $1$ ($0$) is denoted by a bold $\bsone$ ($\bs{0}$). Given a matrix $A \in \R^{m \times n}$, $A^\top$ denotes its transpose, $a_{i,j}$ its $(i,j)$ entry, $A_{:,j}$ (resp., $A_{i,:}$) its $j$-th column ($i$-th row). Given a vector $v \in \R^m$, for any set of indices $\mc{I} \subseteq \{1, \ldots, m\}$, $A_{\mc{I}}$ (resp., $v_{\mc{I}}$) denotes the submatrix (subvector) obtained by selecting the rows (elements) indicated in $\mc{I}$. $A \otimes B$ represents the Kronecker product between matrices $A$ and $B$. For $A \in \bbS_{\succcurlyeq 0}^{n}$, $\| v \|_A \eqdef \sqrt{ v^\top A v }$. Given vectors $v$, $u \in \R^m$, $v \perp u$ imposes a complementarity condition between them, i.e., $v^\top u = 0$. With $\|A\|_{\alpha} \eqdef \textrm{sup}_{\|x\|_\alpha \leq 1} \, \|A x\|_\alpha = \textrm{sup}_{x \neq 0} \, \|A x\|_\alpha / \|x\|_\alpha$ we denote the norm over matrices in $\R^{m \times n}$ induced by an arbitrary norm $\|\cdot\|_\alpha$ over both $\R^n$ and $\R^m$. Given a function $V : \R^n \to \R_{\geq 0}$, $\Omega_a \eqdef \{x \in \R^n \mid V(x) \leq a\}$ denotes the generic $a$-sublevel set of $V(\cdot)$. $\mathrm{exp} : \R \to \R_{> 0}$ represents the natural exponential function. For a given set $\mc{S} \subseteq \R^n$, $|\mc{S}|$ represents its cardinality, while $\textrm{int}(\mc{S})$ its topological interior. Given a mapping $F : \R^n \to \R^m$, the local $\alpha$-Lipschitz constant over some set $\mc{S} \subseteq \R^n$ is denoted as $\mc{L}_{\alpha}(F,\mc{S})$.
With a slight abuse of notation, we indicate with $\partial F(\mc{S})$ the generalized Jacobian of $F(\cdot)$ over the whole set $\mc{S}$. The operator $\col(\cdot)$ stacks its arguments in column vectors or matrices of compatible dimensions, $\textrm{avg}(\cdot)$ is the average operator of its arguments, $\vvec(\cdot)$ maps a matrix to a vector that stacks its columns, and $\textrm{proj}_{\mc S} : \R^m \to \mc{S}$ denotes the standard point-to-set projection mapping \cite[\S 8.1]{boyd2004convex}. To indicate the state evolution of discrete-time \gls{LTI} systems, we sometimes use $x(k+1)$, $k \in \N_0$, as opposed to $x^+$, making the time dependence explicit whenever necessary.

\glsresetall

\section{An approximation problem}\label{sec:problem_description}
We will consider the problem of stabilizing the constrained, discrete-time, \gls{LTI} system
\begin{equation}\label{eq:LTI_sys}
	x^+ = Ax + B u,
\end{equation}
with state variable $x \in \mc{X}$, control input $u \in \mc{U}$ and state-space matrices $A \in \R^{n \times n}$ and $B \in \R^{n \times m}$.  We will assume that the constraint sets $\mc{X} \subseteq \R^n$ and $\mc{U} \subseteq \R^m$ are bounded polyhedral.
A popular control choice for constrained systems is \gls{MPC}, an optimization-based control method implemented in receding horizon. Specifically, an \emph{implicit} \gls{MPC} policy requires one to solve, at every iteration, the following \gls{mp-QP} over a finite time horizon of length $T \geq 1$, $\mc{T} \eqdef \{0, \ldots, T-1\}$,
\begin{equation}\label{eq:QP_init}
	  V_T(x) =  \left\{
	\begin{aligned}
		&\underset{({v_i})_{i \in \mc{T}}}{\textrm{min}} &&  \tfrac{1}{2} \|x_T\|^2_P + \sum_{i \in \mc{T}} \tfrac{1}{2} (\|x_i\|^2_Q + \|v_i\|^2_R)\\
		&\hspace{.15cm}\textrm{ s.t. } && x_{i+1} = A x_i + B v_i, \;  i \in \mc{T},\\
		&&& x_i \in \mc{X}, \, v_i \in \mc{U}, \; i \in \mc{T}, \\
		&&&x_0 = x.
	\end{aligned}
	\right.
\end{equation}
Starting from some $x(0) \in \mc{X}$, the receding horizon implementation of an \gls{MPC} law computes an optimal solution $(v^\star_i)_{i \in \mc{T}}$, and then applies the control input $u(0) = v^\star_0$ taken from the first part of the optimal sequence. This process is then repeated at every time $k$ with initial condition $x_0 = x(k)$, so that the procedure amounts to the implicit computation of a fixed mapping $x \mapsto v^\star_0(x)$. We define this control law explicitly as
\[
u_\textrm{MPC}(x) \eqdef v_0^\star(x),
\]
to emphasize this dependence.
Under standard assumptions on the data characterizing the optimization problem in \eqref{eq:QP_init} (i.e., weight matrices and constraints), it is well known that the associated \gls{MPC} control law $u(k) = u_\textrm{MPC}(x(k))$ stabilizes the constrained \gls{LTI} system \eqref{eq:LTI_sys} about the origin \cite{borrelli2017predictive,rawlings2017model} while, at the same time, respecting state and input constraints.

In some applications the dynamics of the underlying system may be too fast relative to the  time required to compute the solution to the \gls{mp-QP} in \eqref{eq:QP_init}. One may then rely on the \emph{explicit} version of the \gls{MPC} law in \eqref{eq:QP_init}, i.e., \gls{eMPC}  \cite{bemporad2002explicit}, whose closed form expression can be computed offline.
The optimal solution mapping $u_\textrm{MPC}(\cdot)$ enjoys a \gls{PWA} structure that maps any $x \in \mc{X}$ into an affine control action according to some polyhedral partition of $\mc{X}$. The partition and associated affine functions for $u_\textrm{MPC}(\cdot)$ can be computed offline, e.g., using \gls{MPC} Toolbox \cite{bemporad2021model}.   However, the computational effort required for this offline computation may itself be too demanding, since the number of regions in the optimal partition can grow exponentially with the number of states and constraints in  \eqref{eq:QP_init} \cite{alessio2009survey}.  In addition, even if computable offline, the online implementation of the explicit solution may have excessive storage requirements.

The well-known limitations of \gls{MPC} and \gls{eMPC} motivate the design of an approximation for $u_\textrm{MPC}(\cdot)$ that can be implemented with minimal computation and storage requirements while still maintaining stability and good performance of the closed-loop system. We focus on controllers implemented using \gls{ReLU} neural networks \cite{Goodfellow-et-al-2016}, which provide a natural means for approximating $u_\textrm{MPC}(\cdot)$ since the output mapping of such a network has \gls{PWA} structure \cite{montufar2014number,siahkamari2020piecewise}.

Specifically, after training a \gls{ReLU} network to produce a mapping $u_\textrm{NN} : \mc{X} \to \R^m$ to approximate $u_\text{MPC}(\cdot)$, we ask whether the training was sufficient to ensure stability of the closed-loop system in \eqref{eq:LTI_sys} with \gls{PWA-NN} controller $u_\textrm{NN}(\cdot)$ in place of $u_\textrm{MPC}(\cdot)$ (see Fig.~\ref{fig:NN_feedback}). In the next section we describe the features of the approximation error function $e(x) \eqdef u_{\textrm{NN}}(x) - u_{\textrm{MPC}}(x)$ that are crucial to guarantee the stability of the closed-loop system in \eqref{eq:LTI_sys} with the \gls{PWA-NN} controller $u_\textrm{NN}(\cdot)$.  We subsequently provide an \gls{MI} optimization-based method to exactly compute those quantities offline. The result will be a set of conditions on the optimal value of two \glspl{MILP} sufficient to certify the stability of the closed-loop system \eqref{eq:LTI_sys} under the action of approximated \gls{MPC} law $u_\textrm{NN}(\cdot)$.

\begin{figure}[t!]
	\centering
	\ifTwoColumn
		\includegraphics[width=.9\columnwidth]{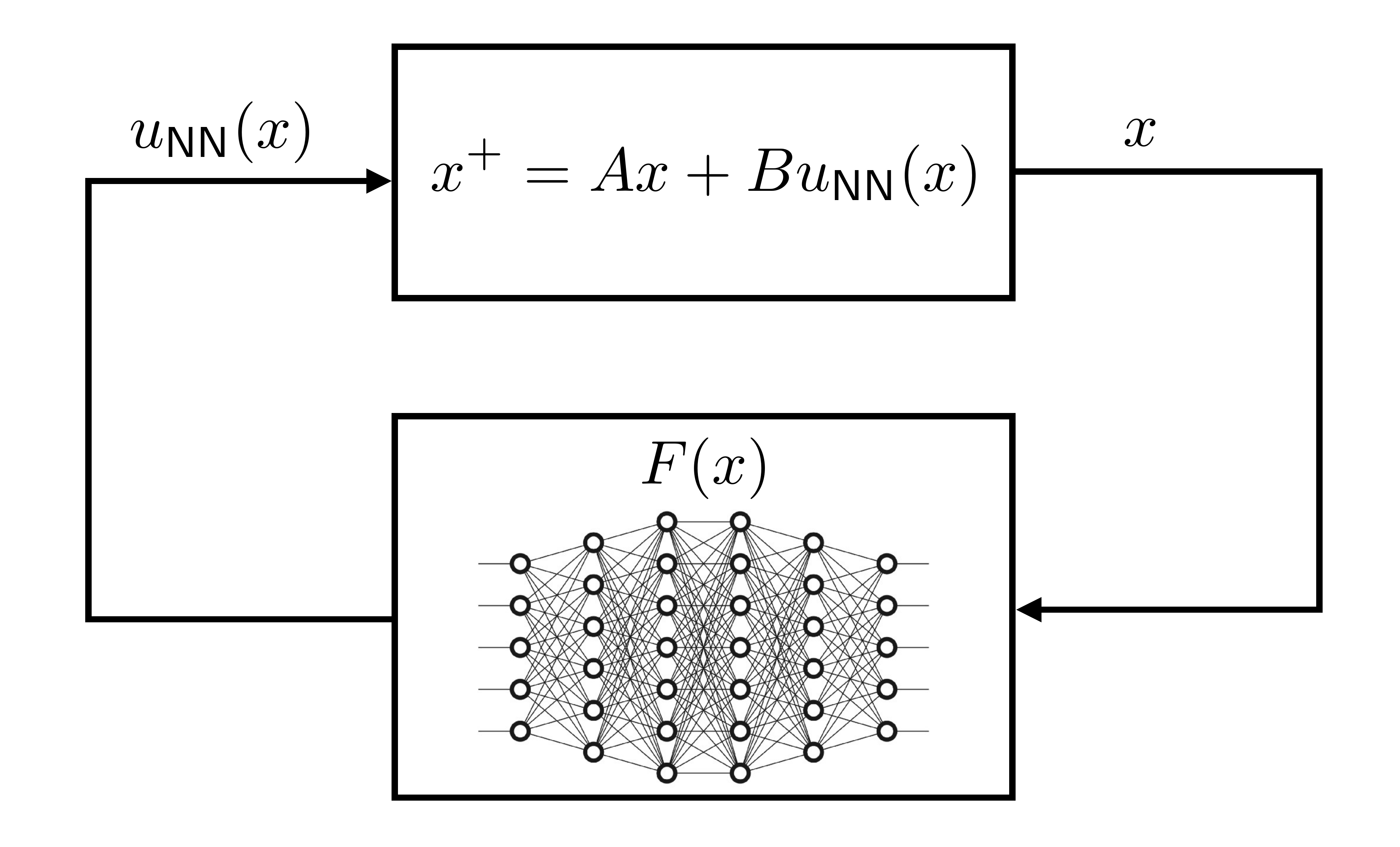}
	\else
		\includegraphics[width=.5\columnwidth]{NN_feedback}
	\fi
	\caption{Feedback loop with \gls{PWA-NN} controller.}
	\label{fig:NN_feedback}
\end{figure}

\section{Stability analysis of piecewise-affine\\ neural network controllers}\label{sec:stability}
We first address the stability of the \gls{LTI} system in \eqref{eq:LTI_sys} with approximately optimal controller $u = u_{\textrm{NN}}(x)$, by considering the robust stability of the underlying system with \gls{MPC} policy, $u_{\textrm{MPC}}(\cdot)$, subject to an additive disturbance:
\begin{equation}\label{eq:perturbed_dyn}
		x^+ = Ax + Bu_{\textrm{NN}}(x) = Ax + Bu_{\textrm{MPC}}(x) + Be(x).
\end{equation}

We assume that the approximation error $e : \mc{X} \to \R^m$ is bounded on $\mc{X}$ and Lipschitz continuous in some set $\mc{X}_\infty \subseteq \mc{X}$ to be defined later. For some $\alpha \in \N_\infty$, we hence assume that there exist constants $\bar{e}_\alpha$, $\mc{L}_\alpha(e, \mc{X}_\infty) \geq 0$ such that $\|e(x)\|_\alpha \leq \bar{e}_\alpha$ for all $x \in \mc{X}$, and $\|e(x) - e(y)\|_\alpha \leq \mc{L}_\alpha(e, \mc{X}_\infty) \|x - y\|_\alpha$ for all $x$, $y \in \mc{X}_\infty$. In \S \ref{sec:training_quality}, we will show how these conditions can be made to hold, providing an \gls{MI} optimization-based method to compute $\bar{e}_\alpha$ and $\mc{L}_\alpha(e, \mc{X}_\infty)$ exactly.
Throughout the paper, we make the following mild assumption:

\begin{standing}\label{standing:mp-QP}
	For the \normaltext{\gls{LTI}} system \eqref{eq:LTI_sys} under the action of the \normaltext{\gls{MPC}} policy $u_{\normaltext{\textrm{MPC}}}(\cdot)$:
	\begin{itemize}
		\item[(i)] the origin is exponentially stable;
		\item[(ii)] the \normaltext{\gls{mp-QP}} in \eqref{eq:QP_init} is recursively feasible starting from any $x \in \mc{X}$. \hfill$\square$
		\end{itemize}
\end{standing}
%

Since the system in \eqref{eq:LTI_sys} is \gls{LTI}, Standing Assumption~\ref{standing:mp-QP}.(i) is satisfied under standard design choices (see, e.g., \cite[\S 2.5--2.6]{rawlings2017model}), whereas condition (ii) is assumed without restrictions, as our results also apply if one considers a subset of $\mc{X}$ for which the problem in \eqref{eq:QP_init} is feasible.
By exploiting the optimal cost $V_T(\cdot)$ of the \gls{mp-QP} in \eqref{eq:QP_init}, with $\Omega_a$ denoting the associated $a$-sublevel set, we first establish that the closed-loop system in \eqref{eq:LTI_sys} with an approximated \gls{MPC} law is \gls{ISS} \cite{jiang2001input} when the maximal approximation error $\bar{e}_\alpha$ is sufficiently small. The proof of this result, along with the others in this section, is deferred to Appendix~\ref{app:proofs}.

\begin{lemma}\label{lemma:NN_stability}
	There exists $\zeta > 0$ such that, if $\bar{e}_{\alpha} < \zeta$, the \normaltext{\gls{LTI}} system in \eqref{eq:LTI_sys} with \normaltext{\gls{PWA-NN}} controller $u = u_{\normaltext{\textrm{NN}}}(x)$ converges exponentially to some neighbourhood of the origin $\Omega_b$, for all $x(0) \in \Omega_c \supset \Omega_b$, with $c \eqdef \textnormal{\textrm{max}} \, \{a \geq 0 \mid \Omega_a \subseteq \mc{X}\}$.
	\hfill$\square$
\end{lemma}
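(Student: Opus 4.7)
The plan is to treat \eqref{eq:perturbed_dyn} as the MPC-controlled nominal system subject to a bounded additive disturbance $Be(x)$, and to invoke a standard input-to-state-stability argument using the optimal value function $V_T(\cdot)$ of \eqref{eq:QP_init} as an ISS-Lyapunov function. Concretely, under Standing Assumption~\ref{standing:mp-QP} the exponential stability of the origin under $u_{\textrm{MPC}}(\cdot)$ is equivalent to the existence of constants $\alpha_1,\alpha_2,\alpha_3>0$ such that, for all $x\in\mc{X}$,
\begin{equation*}
\alpha_1\|x\|_\alpha^2 \leq V_T(x) \leq \alpha_2\|x\|_\alpha^2, \qquad V_T(Ax+Bu_{\textrm{MPC}}(x)) - V_T(x) \leq -\alpha_3\|x\|_\alpha^2
\end{equation*}
(see, e.g., \cite[Ch.~2]{rawlings2017model}); by construction $\Omega_c$ is then a positively invariant compact subset of $\mc{X}$ for the nominal closed loop, and is the largest such sublevel set that fits inside $\mc{X}$.

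Next, I would quantify the effect of the approximation error on the Lyapunov increment. Since $V_T(\cdot)$ is continuous and piecewise quadratic on the bounded polyhedron $\mc{X}$, it is globally Lipschitz on $\mc{X}$ with some constant $L_V\geq 0$. Provided that $x^+$ lies in $\mc{X}$, combining this Lipschitz bound with $\|e(x)\|_\alpha\leq\bar{e}_\alpha$ yields
\begin{equation*}
V_T(x^+) - V_T(x) \leq -\alpha_3\|x\|_\alpha^2 + L_V\,\|B\|_\alpha\,\bar{e}_\alpha,
\end{equation*}
which is strictly negative whenever $\|x\|_\alpha^2 > L_V\|B\|_\alpha\bar{e}_\alpha/\alpha_3$. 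Using $V_T(x)\leq\alpha_2\|x\|_\alpha^2$, this defines an ultimate sublevel set $\Omega_b$ whose level $b$ is of order $\bar{e}_\alpha$ and hence shrinks to zero as $\bar{e}_\alpha\to 0$.

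Finally, I would close the argument by a sublevel-set containment. Choose $\zeta>0$ small enough that $\bar{e}_\alpha<\zeta$ forces $\Omega_b\subsetneq\Omega_c$; this is possible because $b = O(\bar{e}_\alpha)$ while $c$ is fixed by the problem data. For any $x(0)\in\Omega_c$, standard Lyapunov arguments then show that $V_T(x(k))$ is non-increasing as long as $x(k)\notin\Omega_b$, so $\Omega_c$ remains forward invariant and the previous decrease estimate stays valid along the trajectory; combined with the upper and lower quadratic bounds on $V_T$, this gives a geometric decay of $V_T(x(k))$ down to level $b$, i.e., exponential convergence of $x(k)$ to $\Omega_b$.

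The main obstacle is the simultaneous requirement that (i) the perturbed successor $x^+$ stay inside $\mc{X}$ so that the global Lipschitz bound on $V_T$ is usable, and (ii) the sublevel set $\Omega_c$ remain forward invariant for \eqref{eq:perturbed_dyn}. Both are secured by shrinking $\zeta$ further if necessary: with a sufficiently small disturbance amplitude, $\Omega_c$ admits a (slightly) smaller \emph{robust} positively invariant subset that still contains $\Omega_b$, and outside $\Omega_b$ the nominal decrease $\alpha_3\|x\|_\alpha^2$ dominates the disturbance term $L_V\|B\|_\alpha\bar{e}_\alpha$. Exponential convergence to $\Omega_b$, rather than merely asymptotic convergence, follows directly from the linear ratio between the Lyapunov decrease and $V_T$ itself on $\Omega_c\setminus\Omega_b$.
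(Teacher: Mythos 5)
Your proposal is correct and follows essentially the same route as the paper's proof: decompose \eqref{eq:perturbed_dyn} as the nominal MPC closed loop plus the bounded disturbance $Be(x)$, use $V_T(\cdot)$ with its quadratic upper/lower bounds and its Lipschitz constant as an ISS-Lyapunov function, derive an ultimate sublevel set $\Omega_b$ of size $O(\bar{e}_\alpha)$, and secure robust positive invariance of $\Omega_c$ together with a geometric contraction of $V_T$ on $\Omega_c\setminus\Omega_b$ by taking $\bar{e}_\alpha$ small enough. The only cosmetic difference is that the paper works with the multiplicative contraction $V_T(x^+)\leq\gamma V_T(x)+\mc{L}_2(V_T,\Omega_c)\|w\|_2$, $\gamma=1-c_1/c_2$, and thereby extracts the explicit threshold $\zeta=(\rho-\gamma)c/(s\,\mc{L}_2(V_T,\Omega_c))$ with a tunable $\rho\in(\gamma,1)$, whereas you keep the additive decrease form; the two are equivalent.
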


Lemma~\ref{lemma:NN_stability} says that if the worst-case approximation error over $\mc{X}$ is strictly smaller that $\zeta$, then the closed-loop in \eqref{eq:LTI_sys} with \gls{PWA-NN} controller $u_{\textrm{NN}}(\cdot)$ is \gls{ISS} and its state trajectories satisfy the constraints, since $\Omega_c$ is robust positively invariant.
From the related proof, it turns out that $\zeta$ is tunable through a nonnegative parameter $\rho$, which strikes a balance between the robustness of the closed-loop system and the performance of the approximated controller (see Appendix~\ref{app:proofs}, specifically \eqref{eq:error_bound}). In fact, the larger the $\rho$, the larger the approximation error that can be tolerated while guaranteeing \gls{ISS}. On the other hand, this reduces the guaranteed rate of convergence to $\Omega_b$. 

Define $\mc{X}_\infty$ as the set of states for which the stabilizing unconstrained linear gain (typically the \gls{LQR}), $\bar{K}_\textrm{MPC} \in \R^{m \times n}$, satisfies both state and control constraints, i.e., $\mc{X}_\infty \eqdef \{x \in \mc{X} \mid x(0) = x \in \mc{X}, A x(k) + B \bar{K}_\textrm{MPC}x(k) \in \mc{X}, \, \bar{K}_\textrm{MPC}x(k) \in \mc{U}, \, k \in \mc{T}, \, x(T) \in \mc{X}\}$. Within $\mc{X}_\infty$, which is the maximal output admissible set as described in \cite{gilbert1991linear}, the system \eqref{eq:perturbed_dyn} still enjoys exponential convergence if the local Lipschitz constant of $e(\cdot)$ meets a certain condition:

\begin{lemma}\label{lemma:NN_stability_local}
	There exists $\vartheta > 0$ such that, if $\mc{L}_\alpha(e,\mc{X}_\infty) < \vartheta$, the \normaltext{\gls{LTI}} system in \eqref{eq:LTI_sys} with \normaltext{\gls{PWA-NN}} controller $u = u_{\normaltext{\textrm{NN}}}(x)$ converges exponentially to the origin for all $x(0) \in \mc{X}_\infty$.
	\hfill$\square$
\end{lemma}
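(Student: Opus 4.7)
The plan is to exploit the well-known fact that, on the maximal output admissible set $\mc{X}_\infty$, the \gls{MPC} law defined by \eqref{eq:QP_init} coincides with the unconstrained stabilizing linear gain $\bar{K}_\textrm{MPC}$. Consequently, within $\mc{X}_\infty$ the perturbed closed-loop system \eqref{eq:perturbed_dyn} reduces to a Schur-stable linear system driven by a disturbance whose size is controlled pointwise by $\mc{L}_\alpha(e,\mc{X}_\infty)$, and standard robust Lyapunov analysis yields the claimed threshold $\vartheta$.

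First, I would restrict \eqref{eq:perturbed_dyn} to $\mc{X}_\infty$ and substitute $u_\textrm{MPC}(x) = \bar{K}_\textrm{MPC} x$ to rewrite the dynamics as
\begin{equation*}
x^+ = A_{\textrm{cl}} \, x + B\, e(x), \qquad A_{\textrm{cl}} \eqdef A + B\bar{K}_\textrm{MPC}.
\end{equation*}
Since $u_\textrm{MPC}(0) = u_\textrm{NN}(0) = 0$, the bound $\|e(x) - e(y)\|_\alpha \leq \mc{L}_\alpha(e,\mc{X}_\infty)\|x-y\|_\alpha$ evaluated at $y=0$ sharpens to the homogeneous pointwise bound $\|e(x)\|_\alpha \leq \mc{L}_\alpha(e,\mc{X}_\infty)\|x\|_\alpha$ on $\mc{X}_\infty$.

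Next, by Schur stability of $A_{\textrm{cl}}$, I would pick $P, Q_P \succ 0$ solving the discrete Lyapunov equation $A_{\textrm{cl}}^\top P A_{\textrm{cl}} - P = -Q_P$ and set $V(x) \eqdef \|x\|_P^2$. Expanding $V(x^+) - V(x)$ and bounding the cross- and quadratic-disturbance terms with the pointwise bound above, together with the equivalence of $\|\cdot\|_\alpha$ and $\|\cdot\|_2$ in finite dimension, I would arrive at an inequality of the form
\begin{equation*}
V(x^+) - V(x) \leq \bigl(-\mu_1 + \mu_2 \, \mc{L}_\alpha(e,\mc{X}_\infty) + \mu_3 \, \mc{L}_\alpha(e,\mc{X}_\infty)^2\bigr)\|x\|^2,
\end{equation*}
for explicit constants $\mu_1, \mu_2, \mu_3 > 0$ depending only on $A_{\textrm{cl}}, B, P, Q_P$, and on the norm-equivalence constants. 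Defining $\vartheta$ as the positive root of the quadratic in the parenthesized expression then guarantees that $V$ is a strict Lyapunov function whenever $\mc{L}_\alpha(e,\mc{X}_\infty) < \vartheta$, yielding exponential decay of $V$ (and hence of $\|x\|$) to zero.

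Finally, I would address the invariance of the region of attraction. The largest $P$-sublevel set $\Omega_a \subseteq \mc{X}_\infty$ is automatically invariant for the perturbed dynamics once $V$ is strictly decreasing. Extending the conclusion to all of $\mc{X}_\infty$, as stated in the lemma, is the main technical obstacle: it requires either shrinking $\vartheta$ so that the displacement $B e(x)$ cannot push the nominal successor $A_{\textrm{cl}} x$ outside $\mc{X}_\infty$ whenever $x \in \mc{X}_\infty$, or invoking a sublevel-set exhaustion argument analogous to the one used in the proof of Lemma~\ref{lemma:NN_stability}. The remainder is a routine perturbed-LTI Lyapunov computation.
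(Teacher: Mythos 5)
Your proposal is correct in substance but takes a genuinely different route from the paper's. You share the same two opening moves---on $\mc{X}_\infty$ the \gls{MPC} law reduces to the linear gain $\bar{K}_\textrm{MPC}$ so that \eqref{eq:perturbed_dyn} becomes a Schur-stable linear system perturbed by $Be(x)$, and $e(0)=0$ upgrades the Lipschitz bound to the homogeneous estimate $\|e(x)\|_\alpha \le \mc{L}_\alpha(e,\mc{X}_\infty)\|x\|_\alpha$---but from there the paper uses no Lyapunov function at all: it writes the explicit variation-of-constants formula $x(k)=\bar{A}^k x(0)+\sum_{j=0}^{k-1}\bar{A}^{k-j-1}Be(j)$ with $\|\bar{A}^k\|_2\le\varsigma\lambda^k$, rescales by $\lambda^{-k}$, and applies the discrete Gr\"onwall inequality, which yields the closed-form threshold $\vartheta=-\lambda\ln\lambda/(s'\varsigma)$. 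Your quadratic-Lyapunov argument via $A_{\mathrm{cl}}^\top P A_{\mathrm{cl}}-P=-Q_P$ is equally valid and arguably more standard; it buys you an explicitly invariant $P$-sublevel set and a threshold given as the positive root of a quadratic, whereas the paper's Gr\"onwall route buys a cleaner closed-form $\vartheta$ in terms of the decay constants $(\varsigma,\lambda)$, which the authors later estimate via the Gelfand formula in the numerical section. One point in your favour: the forward-invariance issue you flag at the end (trajectories must remain in $\mc{X}_\infty$ for the local Lipschitz bound and the identity $u_{\textrm{MPC}}(x)=\bar{K}_\textrm{MPC}x$ to keep applying at every step) is equally present in, and silently glossed over by, the paper's proof; your suggestion of shrinking $\vartheta$ or restricting to an invariant sublevel set is exactly what a fully rigorous version of either argument would require.
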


Putting the previous results together gives us our main stability result, upon which subsequent requirements on the fidelity of our \gls{ReLU}-based controllers will be based:

\begin{theorem}\label{th:exp_conv}
	If $\bar{e}_\alpha$ and $\mc{L}_\alpha(e, \mc{X}_\infty)$ satisfy \eqref{eq:error_bound} and \eqref{eq:upper_bound_lipschitz}, respectively, and $b \geq 0$ can be chosen so that $\Omega_b \subseteq \mc{X}_\infty$, then the \normaltext{\gls{LTI}} system in \eqref{eq:LTI_sys} with \normaltext{\gls{PWA-NN}} controller $u = u_{\normaltext{\textrm{NN}}}(x)$ converges exponentially  to the origin, for all $x(0) \in \Omega_c$.
	\hfill$\square$
\end{theorem}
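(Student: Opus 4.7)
The plan is to glue together Lemmas~\ref{lemma:NN_stability} and \ref{lemma:NN_stability_local} by viewing the closed-loop trajectory in two phases: an outer phase governed by the input-to-state stability argument of Lemma~\ref{lemma:NN_stability}, and an inner phase governed by the local contraction argument of Lemma~\ref{lemma:NN_stability_local}. The hypothesis $\Omega_b \subseteq \mc{X}_\infty$ is exactly what is needed to hand off from the first phase to the second without losing any information.

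Concretely, I would first fix any $x(0) \in \Omega_c$ and invoke Lemma~\ref{lemma:NN_stability}. Under condition \eqref{eq:error_bound} this guarantees that the trajectory remains in $\Omega_c \subseteq \mc{X}$ for all $k \in \N_0$ (so constraints are respected and the perturbed dynamics \eqref{eq:perturbed_dyn} stay well-defined), and that $V_T(x(k))$ decays exponentially until entering $\Omega_b$. Hence there is a finite time $k^\star \in \N_0$, controlled by the contraction rate and by the gap between $V_T(x(0))$ and $b$, such that $x(k^\star) \in \Omega_b$. By the standing hypothesis $\Omega_b \subseteq \mc{X}_\infty$, this means $x(k^\star) \in \mc{X}_\infty$.

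Next, using $x(k^\star)$ as an initial condition, I would apply Lemma~\ref{lemma:NN_stability_local}. Under condition \eqref{eq:upper_bound_lipschitz} the set $\mc{X}_\infty$ is positively invariant under the closed-loop dynamics \eqref{eq:perturbed_dyn} (inside $\mc{X}_\infty$ the unconstrained MPC gain is the active law, so the local Lipschitz analysis in the proof of that lemma applies at every subsequent step), and the trajectory converges exponentially to the origin from $x(k^\star)$. Combining the two phases yields a single exponential envelope on the whole trajectory from $x(0)$: one chooses the overall rate as the minimum of the two rates and the multiplicative constant large enough to absorb the bounded transient up to time $k^\star$, which depends only on $\Omega_c$ and $\Omega_b$ and is therefore uniform in $x(0) \in \Omega_c$.

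The main obstacle I anticipate is the transition step: making rigorous that once the trajectory enters $\Omega_b$, it never leaves $\mc{X}_\infty$ under the action of $u_{\textrm{NN}}(\cdot)$ rather than $u_{\textrm{MPC}}(\cdot)$. Lemma~\ref{lemma:NN_stability_local} gives exponential decrease of the appropriate Lyapunov function inside $\mc{X}_\infty$, but we still need forward invariance of $\mc{X}_\infty$ (or of a sublevel set contained in it) to iterate the bound; this follows because the Lyapunov function of Lemma~\ref{lemma:NN_stability_local} is non-increasing along \eqref{eq:perturbed_dyn} in $\mc{X}_\infty$, so any sufficiently small sublevel set contained in $\mc{X}_\infty$ (in particular $\Omega_b$, by hypothesis, or a suitable sublevel set of the quadratic Lyapunov function sandwiching it) is positively invariant. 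With that, the two exponential estimates concatenate cleanly and the claim follows.
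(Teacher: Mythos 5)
Your proposal is correct and follows essentially the same route as the paper: use Lemma~\ref{lemma:NN_stability} to show the trajectory reaches $\Omega_b \subseteq \mc{X}_\infty$ in finite time with the value function contracting at rate $\rho$ along the way, then hand off to Lemma~\ref{lemma:NN_stability_local} and concatenate the two exponential envelopes. Your concern about forward invariance of $\mc{X}_\infty$ under $u_{\textrm{NN}}(\cdot)$ is well placed (the paper glosses over it), and the resolution you sketch is the right one --- though note that the correct mechanism is the robust positive invariance of $\Omega_b$ established in the proof of Lemma~\ref{lemma:NN_stability}, since the proof of Lemma~\ref{lemma:NN_stability_local} proceeds by a Gr\"onwall estimate on the explicit solution rather than by a monotone Lyapunov function.
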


\begin{remark}\label{remark:omega_b}
	A less conservative condition still guaranteeing exponential stability is possible by replacing  $\mc{L}_\alpha(e, \mc{X}_\infty)$ with $\mc{L}_\alpha(e, \Omega_b)$ in \eqref{eq:upper_bound_lipschitz},  but requires the availability of $\bar{e}_\alpha < \zeta$ to tune $b \ge 0$ properly. Computing $\mc{L}_\alpha(e, \mc{X}_\infty)$ can instead be done independently of the values of $b$ and $\bar{e}_\alpha$. However, we observe in \normaltext{\S \ref{sec:simulations}} that since $\bar{e}_\alpha$ can be made very small in practice,  $\Omega_b$ likewise reduces to a very small neighbourhood of the origin so that $\Omega_b \subset \mc{X}_\infty$. Verifying \eqref{eq:upper_bound_lipschitz} with $\mc{L}_\alpha(e, \Omega_b)$ is thus practically meaningful as it allows us to i) recover exponential stability and ii) reduce the computational time.
	\hfill$\square$
\end{remark}
For our application the certificates established in Lemma~\ref{lemma:NN_stability} and \ref{lemma:NN_stability_local} are sufficient only, and hence conservative: a trained \gls{ReLU} network that does not meet those certificates could indeed behave well in practice (see, e.g., Table~\ref{tab:masses_results} in \S \ref{sec:simulations} for those cases in  which $\mc{L}_\infty(e, \Omega_b) < \vartheta$ is not met).
Nevertheless, Lemma~\ref{lemma:NN_stability} is not conservative in the sense that some error $e(\cdot)$ whose norm exceeds $\zeta$ by any amount could conceivably force non-convergence if selected in an adversarial way (a similar statement applies to Lemma~\ref{lemma:NN_stability_local}).
Specifically, since the worst-case error $\bar e_\alpha$ is attained at a particular state in $\mc{X}$, to force non-convergence the worst-case error would not only need to exceed $\zeta$ in norm but also to be realized at the most disadvantageous location in the domain of the controller. While this is unlikely to happen in practice, we can not preclude the possibility that the error mapping $e(\cdot)$ will be somehow realized in a particularly disfavourable way.   If it were to be, then the result would still hold but would be 
non-conservative.

In the rest of the paper, we provide an \gls{MI} optimization-based method to compute the worst-case approximation error and the (local) Lipschitz constant of $e(\cdot)$ exactly, thus providing conditions sufficient to certify the stability and performance of a \gls{ReLU}-based approximation of an \gls{MPC} control law.


\section{Mathematical background}\label{sec:background}
We next consider some properties of both \glspl{PWA-NN} based on \gls{ReLU} networks and of \glspl{mp-QP} typically originating \gls{MPC} policies.  We start with the definition of a \gls{PWA} mapping:

\begin{definition}\textup{(Piecewise-affine mapping \cite[Def.~2.47]{rockafellar2009variational})}\label{def:pwa}
	A continuous mapping $F:\mc{F} \to \R^m$ is \emph{piecewise-affine} on the closed domain $\mc{F} \subseteq \R^n$ if
	\begin{itemize}
		\item[(i)] $\mc{F}$ can be partitioned on a finite union of $N$ disjoint polyhedral sets, i.e. $\mc{F} \eqdef \cup_{i =1}^N \mc{F}_i$ with $\mc{F}_i \cap \mc{F}_j = \emptyset$~$\forall i \neq j$;
		\item[(ii)] $F(\cdot)$ is affine on each of the sets $\mc{F}_i$, i.e.\
		$F(x) = F_i(x) \eqdef G_i x + g_i$, $G_i \in \R^{m \times n}$, $g_i \in \R^m$, $\forall x \in \mc{F}_i$. \hfill$\square$
	\end{itemize}

\end{definition}

From Definition~\ref{def:pwa}, any continuous \gls{PWA} mapping on $\mc{F}$ is also Lipschitz continuous according to the following definition:

\begin{definition}\textup{(Lipschitz constant)}
	The local $\alpha$\emph{-Lipschitz constant} of a mapping $F : \R^n \to \R^m$ over the set $\mc{F} \subseteq \R^n$ is
	\begin{equation}\label{eq:lip_gen}
		\mc{L}_{\alpha}(F,\mc{F}) \eqdef \underset{x \neq y \in \mc{F}}{\normaltext{\textrm{sup}}} \; \frac{\|F(x) - F(y)\|_{\alpha}}{\|x - y\|_{\alpha}}.
	\end{equation}
	If $\mc{L}_{\alpha}(F,\mc{F})$ exists and is finite, then we say that $F(\cdot)$ is $\alpha$-\emph{Lipschitz continuous} over the set $\mc{F}$.
	\hfill$\square$
\end{definition}

\begin{lemma}\textup{(\hspace{-.1mm}\cite[Prop.~3.4]{gorokhovik1994pwa})}\label{lemma:pwa_lip}
	Given an arbitrary norm $\|\cdot\|_\alpha$ over both $\R^n$ and $\R^m$, any \normaltext{\gls{PWA}} mapping $F:\mc{F} \to \R^m$ has $\alpha$-Lipschitz constant of $\normaltext{\textrm{max}}_{i = 1,\ldots,N} \, \|G_i^\top\|_{\alpha}$.
	\hfill$\square$
\end{lemma}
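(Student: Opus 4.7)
The plan is to bound $\|F(x)-F(y)\|_\alpha / \|x-y\|_\alpha$ uniformly over distinct points $x,y\in\mc{F}$ by reducing the argument to a \emph{single} affine piece at a time via a straight-line segment, then to exhibit equality so that the bound is tight.

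First, I would fix arbitrary $x\neq y$ in $\mc{F}$ and parameterize the segment $z(t)\eqdef x + t(y-x)$, $t\in[0,1]$. By Definition~\ref{def:pwa}, $\mc{F}=\bigcup_{i=1}^N \mc{F}_i$ is a finite union of polyhedra, so the segment $z([0,1])$ meets only finitely many of the $\mc{F}_i$, and each intersection is itself a (possibly trivial) interval in the parameter $t$. I would therefore select breakpoints $0 = t_0 < t_1 < \cdots < t_K = 1$ and indices $i_1,\dots,i_K \in \{1,\dots,N\}$ such that the subsegment $z([t_{k-1},t_k])$ lies in $\overline{\mc{F}_{i_k}}$ for each $k$. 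Since $F$ is assumed continuous on $\mc{F}$ and coincides with the affine piece $F_{i_k}(\cdot)=G_{i_k}(\cdot)+g_{i_k}$ on $\mc{F}_{i_k}$, the value of $F$ at the endpoints of each subsegment is given by $F_{i_k}(z(t_k))$ and $F_{i_k}(z(t_{k-1}))$ (matching across breakpoints is automatic by continuity). Hence
\begin{equation*}
F(y)-F(x) \;=\; \sum_{k=1}^{K}\bigl[F_{i_k}(z(t_k))-F_{i_k}(z(t_{k-1}))\bigr] \;=\; \sum_{k=1}^{K}(t_k-t_{k-1})\,G_{i_k}(y-x).
\end{equation*}

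Next, applying the triangle inequality followed by the definition of the induced matrix norm (so that $\|G_{i_k}(y-x)\|_\alpha \le \|G_{i_k}\|_\alpha\,\|y-x\|_\alpha$) and using $\sum_{k=1}^{K}(t_k-t_{k-1})=1$, I obtain
\begin{equation*}
\|F(y)-F(x)\|_\alpha \;\le\; \Bigl(\max_{k=1,\dots,K}\|G_{i_k}\|_\alpha\Bigr)\,\|y-x\|_\alpha \;\le\; \Bigl(\max_{i=1,\dots,N}\|G_i\|_\alpha\Bigr)\,\|y-x\|_\alpha,
\end{equation*}
which gives the upper bound on $\mc{L}_\alpha(F,\mc{F})$. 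For the matching lower bound, I would pick an index $i^\star$ achieving $\max_i\|G_i\|_\alpha$, take any interior point of $\mc{F}_{i^\star}$ (which is nonempty since $\mc{F}_{i^\star}$ is full-dimensional whenever it contributes nontrivially, a mild standing condition for the partition), and restrict \eqref{eq:lip_gen} to $x,y$ inside a small ball contained in $\textrm{int}(\mc{F}_{i^\star})$, where $F\equiv F_{i^\star}$ is affine with operator norm exactly $\|G_{i^\star}\|_\alpha$. Combining the two bounds yields the claim (with the understanding that the matrix-norm convention used by the paper writes $\|G_i^\top\|_\alpha$; the argument is identical up to that notational choice, which for $\alpha=2$ agrees with $\|G_i\|_\alpha$ and for $\alpha\in\{1,\infty\}$ simply swaps the roles of row- and column-sum bounds).

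I expect the main obstacle to be the decomposition step: one must justify that the segment only crosses finitely many pieces, that breakpoints can be chosen so neighbouring subsegments share a single active affine piece despite the partition being disjoint rather than closed-covered, and that the telescoping sum is well-defined at shared endpoints. This is handled by invoking continuity of $F$ together with finiteness and polyhedrality of the partition, but it is the step where a careful argument is needed; everything afterwards is a routine application of the triangle inequality and the operator-norm definition.
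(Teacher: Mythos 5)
The paper does not actually prove this lemma: it is imported verbatim from \cite[Prop.~3.4]{gorokhovik1994pwa}, so there is no in-paper argument to compare against, and your write-up is the standard chord-splitting proof one would expect. It is essentially sound in the setting where the lemma is used, but it contains one genuine gap as a proof of the lemma \emph{as stated}: the entire upper-bound step presupposes that the segment $z([0,1])$ remains inside $\mc{F}$, i.e., that $\mc{F}$ is convex. Definition~\ref{def:pwa} only asks that $\mc{F}$ be a closed finite union of polyhedra, and without convexity the telescoping identity is unavailable and the conclusion can outright fail (e.g., a disconnected $\mc{F}$ with two constant pieces taking different values has all $G_i=0$ but a positive Lipschitz constant; for connected non-convex domains one only gets the bound up to a path-versus-chord distortion factor). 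You should add convexity of $\mc{F}$ as a hypothesis; every instance the paper needs ($\mc{X}$, $\mc{X}_\infty$, $\Omega_b$) is convex, so nothing downstream is affected.

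Two smaller points. Your lower bound needs at least one maximizing index to label a full-dimensional piece: if the maximum of $\|G_i\|_\alpha$ were attained only on a lower-dimensional $\mc{F}_{i^\star}$, the restriction of $F$ to that piece would not exercise the component of $G_{i^\star}$ transverse to its affine hull, and the supremum in \eqref{eq:lip_gen} could be strictly smaller; continuity lets you replace the gain on any lower-dimensional piece by that of an adjacent full-dimensional one, so this is repairable, but it deserves a sentence rather than a parenthesis. Finally, on the transpose: with the paper's convention $\|A\|_\alpha=\sup_{\|x\|_\alpha\le 1}\|Ax\|_\alpha$, the quantity your argument controls is $\max_i\|G_i\|_\alpha$ (the induced norm of $G_i$ itself), which is the correct Lipschitz constant when domain and codomain both carry $\|\cdot\|_\alpha$; the statement's $\|G_i^\top\|_\alpha$ agrees with this for $\alpha=2$ but for $\alpha\in\{1,\infty\}$ one has $\|G_i^\top\|_1=\|G_i\|_\infty$ and $\|G_i^\top\|_\infty=\|G_i\|_1$, so you are right to flag the discrepancy as a convention inherited from the cited reference rather than to paper over it.
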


Lemma~\ref{lemma:pwa_lip} says that it is possible to compute exactly the Lipschitz constant of a \gls{PWA} mapping if one knows the linear term $G_i$ of every component of $F(\cdot)$.  Specifically, $\mc{L}_{\alpha}(F,\mc{F})$ coincides with the maximum gain over the partition of $\mc F$.


\subsection{A family of \gls{PWA} neural networks}\label{subsec:PLNN}
An $L$-layered, feedforward, fully-connected \gls{NN} that defines a mapping $F: \R^n \to \R^{m}$ can be described by the following recursive equations across layers \cite{hagan1997neural}:
\begin{equation}\label{eq:RELU_NN}
	\left\{
	\begin{aligned}
		& x^{0} = x,\\
		& x^{j +1} = \phi(W^j x^j + b^j), \; j \in \{0, \ldots, L-1\},\\
		& F(x) = W^L x^L + b^L,
	\end{aligned}
	\right.
\end{equation}
where $x^{0} = x \in \R^{n_0}$, $n_0 = n$, is the input to the network, $W^{j} \in \R^{n_{j + 1} \times n_j}$ and $b^j \in \R^{n_{j+1}}$ are the weight matrix and bias vector of the $(j+1)$-th layer, respectively (defined during some offline training phase).  The total number of neurons is thus $N \eqdef \sum_{j  = 1}^{L} n_{j} + m$, since $n_{L+1} = m$.
The \emph{activation function} $\phi:\R^{n_j} \to\R^{n_{j}}$ applies component-wise to the pre-activation vector $W^j x^j + b^j$, assumed identical for each layer.

%

Since we focus on \gls{ReLU} networks \cite{Goodfellow-et-al-2016}, we take the activation function to be $\phi(\cdot) = \textrm{max}(\cdot, 0)$. In this case, although $F(\cdot)$ is known to be a \gls{PWA} mapping \cite{montufar2014number,siahkamari2020piecewise}, an explicit description as in Definition~\ref{def:pwa} is generally difficult to compute, thereby complicating determination of its Lipschitz constant.

We next make an assumption that allows us to compute $\mc{L}_{\alpha}(F,\mc{X})$ exactly in case $\|\cdot\|_{\alpha}$ is a linear norm.
For a given $x \in \mc{X}$, let $f_i : \R^n \to \R$, $i \in \{1, \ldots, N\}$, be the input to the $i$-th neuron, i.e., \gls{ReLU}, of the network, and let $\mc{K}_i \eqdef \{x \in \mc{X} \mid f_i(x) = 0\}$ be the associated $i$-th \gls{ReLU} kernel of $F(\cdot)$. 

%

\begin{standing}\label{standing:NN_general_position}
	The \normaltext{\gls{ReLU}} network $F : \R^n \to \R^m$ in \eqref{eq:RELU_NN} is in \emph{general position} in the sense of \normaltext{\cite[Def.~4]{jordan2020exactly}}, i.e.\ for every subset of neurons $\mc{Q} \subseteq \{1, \ldots, N\}$, $\cap_{i \in \mc{Q}} \, \mc{K}_i$ is a finite union of $(n - |\mc{Q}|)$-dimensional polytopes.
	\hfill$\square$
\end{standing}

Given this assumption, the local Lipschitz constant $\mc{L}_{\alpha}(F,\mc{X})$ of a trained \gls{ReLU} network can be computed through an \gls{MILP} \cite[Th.~5]{jordan2020exactly}.  The assumption is widely used in the machine learning literature \cite{hanin2019deep,jordan2020exactly}, and it has been proven that almost every \gls{ReLU} network is in general position \cite[Th.~3]{jordan2020exactly}.

\subsection{\gls{MPC} and \gls{mp-QP} optimization}\label{subsec:MPC}
Since $\mc{X}$ and $\mc{U}$ in \eqref{eq:QP_init} are polytopic sets, the optimization problem in \eqref{eq:QP_init} can be rewritten as an equivalent \gls{mp-QP} with inequality constraints only. Specifically, it amounts to
\begin{equation}\label{eq:QP_final}
	\left\{
	\begin{aligned}
		&\underset{\bs{z}}{\textrm{min}} &&  \tfrac{1}{2} \bs{z}^\top H \bs{z}\\
		&\hspace{0.0cm}\textrm{ s.t. } && N \bs{z} \leq d + S x,
	\end{aligned}
	\right.
\end{equation}
where $\bs{z} \eqdef \bs{v} + H^{-1} D^\top x$, $\bs{v} \eqdef \col((v_i)_{i \in \mc{T}}) \in \R^{mT}$, $H \in \mathbb{S}^{mT}_{\succcurlyeq 0}$ and vector/matrices $H, D, N, d, S$ of appropriate dimensions are obtained from $Q$, $R$, $P$, $A$, $B$, $T$, and the data defining $\mc{X}$ and $\mc{U}$.
In particular, the state propagation constraint in \eqref{eq:QP_init}, $x_{i+1} = A x_i + B v_i$ with $x_0 = x$, allows us to write
 	$$
 		\begin{bmatrix}
 			x_0\\
 			x_1\\
 			x_2\\
 			\vdots\\
 			x_{T}
 		\end{bmatrix} = \underbrace{\begin{bmatrix}
 			0 & 0 & \cdots & 0\\
 			B & 0 & \cdots & 0\\
 			AB & 0 & \cdots & 0\\
 			\vdots &  \vdots & \ddots & \vdots\\
 			A^{T-1}B & A^{T-2} B & \cdots & B
	 	\end{bmatrix}}_{\reqdef \Gamma} \bs{v} + \underbrace{\begin{bmatrix}
	 	I\\
	 	A\\
	 	A^2\\
	 	\vdots\\
	 	A^{T}
 	\end{bmatrix}}_{\reqdef \Theta} x.
 	$$
Thus, we have
 	$H \eqdef \bar{R} + \Gamma^\top \bar{Q} \Gamma$,
 	$D \eqdef \Gamma^\top \bar{Q} \Theta$,
where $\bar{R} \eqdef I\otimes R$ and $\bar{Q} \eqdef \textrm{diag}(I \otimes Q, P)$. With bounded polyhedral constraints acting both on the state and input, i.e., $\Xi x_i \le \xi$ and $\Upsilon v_i \le \varrho$ for given pairs of matrix/vector $(\Xi, \xi)$ and $(\Upsilon, \varrho)$, and all $i \in \mc{T}$, we finally obtain $N \eqdef \col((I \otimes \Xi) \Gamma, I \otimes \Upsilon)$, $d \eqdef \col(\xi \bsone, \varrho \bsone)$ and $S \eqdef N H^{-1} D^\top - \col((I \otimes \Xi) \Theta, \bs{0})$.

Let $\mc{Z}(x) \eqdef \{\bs{z} \in \R^{mT} \mid N \bs{z} \leq d + S x\}$ be the feasible set of \eqref{eq:QP_final} for any $x \in \mc{X}$. We assume without loss of generality that $\rank(S) = n$, since otherwise the problem can be reduced to an equivalent form by considering a smaller set of parameters~\cite{borrelli2017predictive}.

In order to avoid pathological cases when computing the (unique) solution to the \gls{mp-QP} in \eqref{eq:QP_final}, we will make a further standard assumption about the constraints.   Let $p \geq mT$ be the number of linear constraints in \eqref{eq:QP_final}, and let $\mc{P} \eqdef \{1, \ldots, p\}$ be the associated set of indices. For some $x \in \mc{X}$, define the sets of active constraints at a feasible point $\bs{z} \in \mc{Z}(x)$ of \eqref{eq:QP_final} as:
$$
\begin{aligned}
	\mc{A}(x) &\eqdef \{i \in \mc{P} \mid N_{i,:}  \bs{z} - d_i - S_{i,:}  x = 0\}.
\end{aligned}
$$

\begin{standing}\label{standing:LICQ}\textup{(Linear independence constraint qualification \cite[Def.~2.1]{borrelli2017predictive})}\label{def:LICQ}
	For any $x \in \mc{X}$, the \normaltext{\gls{LICQ}} is said to hold at a feasible point $\bs{z} \in \mc{Z}(x)$ of \eqref{eq:QP_final} if $N_{\mc{A}(x)}$ has linearly independent rows.
   For all $x \in \mc{X}$, the \normaltext{\gls{LICQ}} is assumed to hold for the \normaltext{\gls{mp-QP}} in \eqref{eq:QP_final}.
	\hfill$\square$
\end{standing}

It is known that \gls{LICQ} is sufficient to rule out the possibility that more than $mT$ constraints are active at a given feasible point $\bs{z} \in \mc{Z}(x)$, thereby avoiding primal degeneracy, \cite[\S 4.1.1]{bemporad2002explicit}). It follows that under \gls{LICQ} the problem dual to \eqref{eq:QP_final} is a strictly convex program, and therefore its optimal solution is characterized by a unique vector of Lagrange multipliers.

\begin{remark}
	In stating Standing Assumption~\ref{standing:LICQ} we have implicitly assumed that \eqref{eq:QP_final} is feasible for all $x \in \mc{X}$. If this is not the case, then we may restrict Standing Assumption~\ref{standing:LICQ} to those states that make \eqref{eq:QP_final} feasible (see \normaltext{\cite[Th.~6.1]{borrelli2017predictive}}).
	\hfill$\square$
\end{remark}

For any subset of indices $\mc{A} \subseteq \mc{P}$, we define the \emph{critical region} of states $x$ associated with the set of active constraints $\mc{A}$ as $\mc{R}_{\mc{A}} \eqdef \{x \in \mc{X} \mid \mc{A}(x) = \mc{A}\}$, which results into a polyhedral set \cite[Th.~6.6]{borrelli2017predictive}. Collectively these critical regions represent a valid partition of $\mc{X}$, and each of them has an associated component of $u_\textrm{MPC}(\cdot)$ \cite{bemporad2002explicit}, according to Definition~\ref{def:pwa}. This amounts to the explicit version (i.e., \gls{eMPC}) of the \gls{MPC} policy defined by the optimization problem in \eqref{eq:QP_init}.

\section{Maximum gain computation as a\\ mixed-integer linear program}\label{sec:maximal_gain}
We next develop a method of computing the maximum gain \cite{darup2017maximal} (and hence the Lipschitz constant, according to Lemma~\ref{lemma:pwa_lip}) of the \gls{MPC} policy $u_{\textrm{MPC}}(\cdot)$ directly vian \gls{MI} programming.

Note that the maximum gain can also be computed by means of available tools that compute the complete explicit solution to \gls{mp-QP} in \eqref{eq:QP_final} directly, e.g., the \gls{MPC} Toolbox \cite{bemporad2021model}. However, from \cite{jordan2020exactly} we know that the Lipschitz constant of a \gls{ReLU} network, which we will use to approximate the \gls{MPC} policy in \eqref{eq:QP_init}, can itself be computed through an \gls{MILP}. We therefore require a technique compatible with the one proposed in \cite{jordan2020exactly}, which will also allow us subsequently to compute key quantities characterizing the approximation error $e(\cdot)$, according to \S \ref{sec:stability}.

We first require the following two ancillary results:

\begin{lemma}\label{lemma:MI_matrix_norm}
	For any $K \in \R^{m \times n}$ and $\alpha \in \{1, \infty\}$, the norm $\|K\|_{\alpha}$ can be computed by solving a linear program that admits a binary vector as its optimizer.
	\hfill$\square$
\end{lemma}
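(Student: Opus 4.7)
The plan is to reduce the computation to the classical closed forms $\|K\|_\infty = \max_i \sum_j |k_{ij}|$ and $\|K\|_1 = \max_j \sum_i |k_{ij}|$, and then to recast each as a linear program whose optimum is attained at a binary vertex.

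First I would observe that the unit balls $\{x \in \R^n : \|x\|_\alpha \leq 1\}$ for $\alpha \in \{1,\infty\}$ are polytopes with extreme points in $\{-1,0,+1\}^n$: the hypercube $\{-1,+1\}^n$ for $\alpha = \infty$, and the cross-polytope $\{\pm e_j\}_{j=1}^n$ for $\alpha = 1$. Since $x \mapsto \|Kx\|_\alpha$ is convex and piecewise linear, the supremum in $\|K\|_\alpha = \sup_{\|x\|_\alpha \leq 1} \|Kx\|_\alpha$ is attained at such a vertex. Combined with the representations $\|Kx\|_\infty = \max_{i,\, \sigma \in \{\pm 1\}} \sigma K_{i,:} x$ and $\|Kx\|_1 = \max_{s \in \{-1,+1\}^m} s^\top K x$, this collapses the joint maximization into a finite set of linear inequalities on a single scalar variable.

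Next I would write the LP explicitly for $\alpha = \infty$: introduce $t \in \R$, a row-selector $y \in \{0,1\}^m$ with $\mathbf{1}^\top y = 1$, and sign variables $z \in \{0,1\}^n$ encoding $s_j = 2 z_j - 1 \in \{-1,+1\}$, and consider
\begin{equation*}
\max\; t \quad \text{s.t.}\quad t \leq \sum_{j} (2 z_j - 1)\, k_{ij} + M\,(1 - y_i),\ \forall i \in \{1,\ldots,m\},
\end{equation*}
with $M \geq \sum_{i,j} |k_{ij}|$. The big-$M$ renders the inequality vacuous for every index $i$ with $y_i = 0$; on the unique $i^\star$ with $y_{i^\star} = 1$, maximality in $z$ aligns each $s_j$ with $\mathrm{sign}(k_{i^\star j})$, yielding $t^\star = \sum_j |k_{i^\star j}| = \|K\|_\infty$ at a binary $(y^\star, z^\star)$. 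The formulation for $\alpha = 1$ is entirely symmetric, with $y$ now selecting a column of $K$ and $z$ encoding row signs.

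The main obstacle is the intrinsic bilinearity between the selector $y$ and the sign vector $z$ in the direct max-max formulation; the big-$M$ linearization circumvents this but requires an a priori upper bound on the optimum. Since $\|K\|_\alpha \leq \sum_{i,j}|k_{ij}|$ for $\alpha \in \{1,\infty\}$, taking $M$ equal to this elementary bound is non-circular and yields a compact LP that slots cleanly into the subsequent \glspl{MILP} of \S \ref{sec:maximal_gain}.
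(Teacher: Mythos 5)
Your closed-form reductions and the vertex argument for the unit balls are correct, and the big-$M$ program you write down does compute $\|K\|_\infty$ at a binary $(y^\star,z^\star)$ (your bound $M=\sum_{i,j}|k_{ij}|$ is indeed sufficient, since for any non-selected row $i$ the slack required is at most $\sum_j|k_{i^\star j}|+\sum_j|k_{ij}|\le\sum_{i,j}|k_{ij}|$). But the object you construct is not what the lemma asserts. The lemma claims the norm is the value of a \emph{linear program} --- a continuous LP --- that happens to admit a binary optimizer. Your formulation declares $y$ and $z$ binary from the outset, and its continuous relaxation is not tight: with $m=2$, $n=1$, $k_{11}=k_{21}=1$ and $M=2$, the relaxed point $y=(\tfrac{1}{2},\tfrac{1}{2})$, $z_1=1$ satisfies both constraints with $t=2$, whereas $\|K\|_\infty=1$. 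So what you have is a genuine mixed-integer program, not an LP with a binary optimizer, and the statement as written is not proved.

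The distinction is not cosmetic here. The paper's route is to write the standard epigraph LP for $\|K\|_1$ (minimize $t$ subject to $\bsone^\top s_j\le t$ and $-s_j\le K_{:,j}\le s_j$), pass to its dual --- a continuous LP over $\lambda\ge 0$ --- and exhibit an explicit binary dual optimizer: one indicator selecting the maximizing column $\iota$, and two indicator blocks encoding which entries of that column are negative or positive. It is precisely this property that licenses, in Proposition~\ref{prop:MILP_matrix_norm}, replacing the constraint $\lambda\ge 0$ by $\lambda\in\mathbb{B}^{n(1+2m)}$ \emph{without changing the optimal value} before the outer maximization over $x$ is appended. Your construction could likely be reworked to serve that downstream purpose (embed your MILP for fixed $K$ into the maximization over $x$ and linearize the products $z_j\,k_{ij}(x)$ by McCormick, exactly as the paper does with $\lambda$ and $k(x)$), but as a proof of Lemma~\ref{lemma:MI_matrix_norm} itself it establishes a different, weaker claim. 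The missing idea is the duality step that converts the epigraph LP into a maximization whose feasible set is continuous yet whose optimum is attained at a binary point.
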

\begin{proof}
	Let $\alpha =  1$. By definition, we have $\|K\|_1 = \textrm{max}_{j = 1, \ldots, n} \ \{\bsone^\top |K_{:,j}|\}$, where the absolute value $|\cdot|$ is applied element-wise along the $j$-th column of $K$.  Using standard techniques, we obtain the \gls{LP}
\begin{align}
	\|K\|_1 &= \left\{
	\begin{aligned}
		& \underset{t, (s_j)_{j = 1}^{n}}{\textrm{min}} & & t\\
		& \hspace{2.5mm}\textrm{ s.t. } & & \bsone^\top s_j \leq t, \; j = 1, \ldots, n,\\
		&&& - s_j \leq K_{:,j} \leq s_j, \; j = 1, \ldots, n,
	\end{aligned}
	\right. \label{eq:norm_comp_LP_verbose}
\intertext{or, more compactly,}
		\|K\|_1 &= \left\{
		\begin{aligned}
			& \underset{\eta}{\textrm{min}} & & c^\top \eta\\
			& \hspace{0cm}\textrm{ s.t. } & & M \eta \leq h,\\
		\end{aligned}
		\right. \label{eq:norm_comp_LP}
	\end{align}
	where $s \!\eqdef\! \col((s_j)_{j = 1}^n) \!\in\! \R^{mn}$, $\eta \!\eqdef\! \col(s, t) \!\in\! \R^{mn + 1}$, $c \!\eqdef\! \col(\bs{0},1) \!\in\! \R^{mn + 1}$, $M \!\in\! \R^{n(1 + 2m) \times (mn + 1)}$, $h \!\in\! \R^{n(1 + 2m)}$, with
	\begin{equation}\label{eq:matrices}
			M \eqdef \begin{bmatrix}
			\bsone^\top \otimes I & -\bsone\\
			- I & \phantom{-}\bs{0}\\
			- I & \phantom{-}\bs{0}
		\end{bmatrix} , \text{ and } h \eqdef \begin{bmatrix}
			\phantom{-}\bs{0}\\
			\phantom{-}\vvec(K)\\
			-\vvec(K)
		\end{bmatrix}.
	\end{equation}
	 The associated dual problem is then
	\begin{equation}\label{eq:norm_comp_LP_dual}
		\|K\|_1 = \left\{
		\begin{aligned}
			& \underset{\lambda \geq 0}{\textrm{max}} & & -h^\top \lambda\\
			& \hspace{0cm}\textrm{ s.t. } & & M^\top \lambda = - c,\\
		\end{aligned}
		\right.
	\end{equation}
 and strong duality holds since \eqref{eq:norm_comp_LP_verbose} is always feasible  \cite[\S 5.2]{boyd2004convex}. 

We next show how to construct a binary dual optimizer $\lambda^\star \in \mathbb{B}^{n(1 + 2m)}$ for \eqref{eq:norm_comp_LP_dual}.   Let $\iota \in \{1, \ldots, n\}$ be the (possibly not unique) index associated with a column of $K$ such that $\|K\|_1 = \bsone^\top |K_{:,\iota}|$.   Partition the multiplier into $\lambda^\star \eqdef \col(\lambda^0,\lambda^1,\lambda^2)$, where $\lambda^0 \in \mathbb{B}^n$ and $\lambda^1, \lambda^2 \in \mathbb{B}^{mn}$.  Set the $\iota$-th element of $\lambda^0$ to $1$, with all other elements zero.   Set the multiplier $\lambda^1$ to $1$ at those indices corresponding to elements of $\vvec{(K)}$ that are both in the $\iota$-th column and negative, and zero elsewhere.  Construct the multiplier $\lambda^2$ similarly, but for positive elements of $\vvec{(K)}$.  It is then straightforward to confirm that $M^\top \lambda^\star = - c$ and
$-h^\top \lambda^\star = \|K\|_1$.   Proof of the result $\|K\|_\infty$ is similar.
\end{proof}


\begin{proposition}\label{prop:MILP_matrix_norm}
	Suppose $\mc{X} \subseteq \R^n$ is a polytope and $K : \mc{X} \to \R^{m \times n}$ an affine function. Then computing $\mc{L}_{\alpha}(K,\mc{X}) = \normaltext{\textrm{max}}_{x \in \mc{X}} \, \| K(x) \|_\alpha$ amounts to an \normaltext{\gls{MILP}} for $\alpha \in \{1, \infty\}$.
	\hfill$\square$
\end{proposition}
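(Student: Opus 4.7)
The plan is to lift the inner norm evaluation into the outer maximization by invoking the dual LP from Lemma~\ref{lemma:MI_matrix_norm}. First I would fix $x \in \mc{X}$ and apply Lemma~\ref{lemma:MI_matrix_norm} to express $\|K(x)\|_1$ as the optimum of \eqref{eq:norm_comp_LP_dual}, noting that the data $M$ and $c$ are independent of $x$, while $h$ depends on $K$ through $h(x) = \col(\bs{0}, \vvec(K(x)), -\vvec(K(x)))$. Writing the affine map as $K(x) = K_0 + \sum_{i=1}^{n} x_i K_i$ makes $h(\cdot)$ affine in $x$ as well. The crucial leverage from Lemma~\ref{lemma:MI_matrix_norm} is that, for every $x$, the dual maximizer can be chosen in $\mathbb{B}^{n(1+2m)}$; hence restricting $\lambda$ to binary inside the inner problem loses no optimality. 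Combining the outer maximum over $x \in \mc{X}$ with the now binary-constrained inner maximum over $\lambda$ collapses the nested problem into a single joint maximization of $-h(x)^\top \lambda$ over $\{(x,\lambda) \mid x \in \mc{X}, \, \lambda \in \mathbb{B}^{n(1+2m)}, \, M^\top \lambda = -c\}$.

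Second, I would linearize the bilinear terms $x_i \lambda_j$ that appear in $h(x)^\top \lambda$ once $h(x)$ is expanded in the affine decomposition above. Since $\mc{X}$ is a bounded polytope, each coordinate admits a priori bounds $\underline{x}_i \le x_i \le \overline{x}_i$; combined with $\lambda_j \in \mathbb{B}$, each product $x_i \lambda_j$ admits an exact big-M representation via an auxiliary variable $y_{ij}$ and four linking inequalities ($\underline{x}_i \lambda_j \le y_{ij} \le \overline{x}_i \lambda_j$ together with $x_i - \overline{x}_i(1-\lambda_j) \le y_{ij} \le x_i - \underline{x}_i(1-\lambda_j)$). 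Substituting $y_{ij}$ for $x_i \lambda_j$ throughout yields a formulation with a linear objective in $(x,\lambda,y)$, polyhedral constraints on $(x,y)$, an affine equality on $\lambda$ and integrality only on $\lambda$, i.e.\ an MILP whose optimal value equals $\max_{x \in \mc{X}} \|K(x)\|_1$. The $\alpha = \infty$ case proceeds identically, starting from the dual LP formulation for $\|K\|_\infty$ indicated at the end of the proof of Lemma~\ref{lemma:MI_matrix_norm}.

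The step I expect to be the main conceptual obstacle, rather than a computational one, is justifying that restricting the inner $\lambda$ to binary does not degrade optimality once the outer $x$-maximization is introduced. This is not immediate from Lemma~\ref{lemma:MI_matrix_norm} in isolation, since the guarantee there is about existence of a binary optimizer for a \emph{fixed} matrix argument. What is needed is the pointwise version: for every $x \in \mc{X}$ a binary multiplier attains the dual value at $x$, with the selection allowed to depend on $x$. Inspecting the explicit construction of $\lambda^\star$ in the proof of Lemma~\ref{lemma:MI_matrix_norm}, namely picking the column index $\iota$ maximizing $\bsone^\top |K_{:,\iota}(x)|$ and setting the sign pattern accordingly, shows that this recipe produces a feasible binary multiplier for each $x \in \mc{X}$, so the pointwise claim is secured and the jointly-binary MILP is a faithful reformulation of $\mc{L}_\alpha(K,\mc{X})$.
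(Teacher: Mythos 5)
Your proposal is correct and follows essentially the same route as the paper: both pass to the dual LP of Lemma~\ref{lemma:MI_matrix_norm}, restrict $\lambda$ to be binary (justified pointwise in $x$, as you rightly emphasize), merge the two maximizations, and linearize the resulting bilinear products with auxiliary variables and big-M inequalities enabled by the boundedness of $\mc{X}$. The only cosmetic difference is that the paper linearizes the products $\lambda^i_j k_j(x)$ directly rather than the coordinate products $x_i\lambda_j$, which yields an equivalent MILP.
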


\begin{proof}

	Since $\mc{X}$ is bounded and $K$ is affine on $\mc{X}$, there exist matrices $\underline{K}$, $\overline{K}$ such that $\underline{K} \le K(x) \le \overline{K}$, where the inequalities apply element-wise. Let $\alpha =  1$.  From \eqref{eq:norm_comp_LP_dual},
	$$
	\mc{L}_{1}(K,\mc{X}) \!=\! \underset{x \in \mc{X}}{\textrm{max}} \, \|K(x)\|_1 \!=\! \left\{
	\begin{aligned}
		& \underset{x, \lambda}{\textrm{max}} & & -h(x)^\top \lambda\\
		& \hspace{0cm}\textrm{ s.t. } & & M^\top \lambda = - c,\\
		&&& x \in \mc{X}, \lambda \in \mathbb{B}^{n(1 + 2m)},
	\end{aligned}
	\right.
	$$
	where we have substituted the constraint $\lambda \geq 0$ with $\lambda \in \mathbb{B}^{n(1 + 2m)}$ due to Lemma~\ref{lemma:MI_matrix_norm} and defined $h(x) = \col(\bs{0}, k(x), -k(x))$ and $k(x) \eqdef \vvec(K(x))$.   Partition the binary variable $\lambda$ as $\lambda = \col(\lambda^0,\lambda^1,\lambda^2)$ with $\lambda^1,\lambda^2 \in \mathbb{B}^{nm}$, so that the objective function becomes $-h(x)^\top \lambda = -k(x)^\top\lambda^1 + k(x)^\top\lambda^2$.   Using standard \gls{MI} modelling techniques (e.g. \cite{bemporad1999control}), one can introduce $y \eqdef \col(y^1,y^2) \in \mathbb{R}^{2mn}$ and appropriate \gls{MI} linear inequalities such that $[\lambda^i_j = 0] \implies [y^i_j = 0]$, and $[\lambda^i_j = 1] \implies [y^i_j = k_j(x)]$ for $i = 1,2$ and $j = 1, \ldots, mn$.

	Rewriting these additional inequalities as $E \, \col(x, \lambda, y) \leq g$ (this is always possible since any element of $k(\cdot)$ is affine and bounded over $\mc{X}$) with some appropriately constructed matrix $E \in \R^{8mn \times (4m+1)n}$ and vector $g \in \R^{8mn}$, we get
	\begin{equation}\label{eq:norm_comp_MILP}
		\mc{L}_{1}(K,\mc{X}) = \left\{
		\begin{aligned}
			& \underset{x, \lambda, y}{\textrm{max}} & & -\bsone^\top y\\
			& \hspace{0cm}\textrm{ s.t. } & & M^\top \lambda = - c,\\
			&&& E \, \col(x, \lambda, y) \leq g,\\
			&&& x \in \mc{X}, \lambda \in \mathbb{B}^{n(1 + 2m)},
		\end{aligned}
		\right.
	\end{equation}
	which is an \gls{MILP}. Proof of the result for $\mc{L}_{\infty}(K,\mc{X}) = \textrm{max}_{x \in \mc{X}} \, \|K(x)\|_{\infty}$ follows similar arguments. \qedhere

\end{proof}


Proposition~\ref{prop:MILP_matrix_norm} says that the norm of a matrix whose entries are affine in $x \in \mc{X}$ can be computed through an \gls{MILP}. We now state and prove the main result of this section, which says that the maximum matrix norm taken over the entire partition induced by $u_\textrm{MPC}(\cdot)$ can also be computed via an \gls{MILP}:

\begin{theorem}\label{th:norm_comp}
	Let $\alpha \in \{1, \infty\}$. Then computing $\mc{L}_{\alpha}(u_\normaltext{\textrm{MPC}},\mc{X})$ amounts to an \normaltext{\gls{MILP}}.
	\hfill$\square$
\end{theorem}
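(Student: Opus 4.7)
The plan is to reduce $\mc{L}_\alpha(u_\textrm{MPC},\mc{X})$ to a finite maximization over critical regions of \eqref{eq:QP_final}, to characterize the linear part of $u_\textrm{MPC}$ on each region via the parametric KKT sensitivity of the mp-QP, and then to combine the resulting mixed-integer encoding with the norm-computation technique of Proposition~\ref{prop:MILP_matrix_norm}.

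First, by Lemma~\ref{lemma:pwa_lip}, $\mc{L}_\alpha(u_\textrm{MPC},\mc{X})=\max_{\mc{A}}\|G_{\mc{A}}^\top\|_\alpha$, where the maximum ranges over all active sets $\mc{A}\subseteq\mc{P}$ with $\mc{R}_{\mc{A}}\neq\emptyset$ and $G_{\mc{A}}\in\R^{m\times n}$ is the linear part of $u_\textrm{MPC}$ on $\mc{R}_{\mc{A}}$. On $\textrm{int}(\mc{R}_{\mc{A}})$ the KKT system of \eqref{eq:QP_final} is differentiable, so under Standing Assumption~\ref{standing:LICQ} the Jacobian $\tilde G\eqdef\partial \bs{z}^\star/\partial x$ and the multiplier Jacobian $\tilde\Lambda\eqdef\partial\lambda^\star/\partial x$ satisfy the linear system
\begin{equation*}
H\tilde G + N^\top\tilde\Lambda = 0,\quad \tilde\Lambda_{i,:}=0~\text{for}~i\notin\mc{A},\quad N_{i,:}\tilde G = S_{i,:}~\text{for}~i\in\mc{A},
\end{equation*}
which uniquely determines $\tilde G$; recalling that $\bs{v}=\bs{z}-H^{-1}D^\top x$ and $u_\textrm{MPC}(x)=v_0^\star(x)$, one then has $G_{\mc{A}} = E_1(\tilde G - H^{-1}D^\top)$, where $E_1\in\R^{m\times mT}$ extracts the first $m$ rows.

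Next, the critical region will be selected by a binary vector $\mu\in\mathbb{B}^p$. Introducing primal variables $(x,\bs{z},\lambda)$ constrained by the KKT conditions of \eqref{eq:QP_final} together with big-M complementary slackness inequalities $0\le\lambda_i\le M\mu_i$ and $0\le d_i+S_{i,:}x-N_{i,:}\bs{z}\le M(1-\mu_i)$ forces $\mu$ to coincide with the active-set indicator at some $x\in\mc{X}$, hence $\mc{R}_{\mc{A}(\mu)}\neq\emptyset$. Coupling the same $\mu$ to matrix sensitivity variables $(\tilde G,\tilde\Lambda)$ via $H\tilde G+N^\top\tilde\Lambda=0$ and analogous big-M inequalities enforcing $\tilde\Lambda_{i,:}=0$ when $\mu_i=0$ and $N_{i,:}\tilde G = S_{i,:}$ when $\mu_i=1$ then yields a mixed-integer linear description of all admissible pairs $(\mu,G_{\mc{A}(\mu)})$. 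Suitable big-M constants exist because $\mc{X}$ and $\mc{U}$ are bounded and the number of critical regions is finite, which provides a priori bounds on the entries of $(\bs{z},\lambda,\tilde G,\tilde\Lambda)$. Since $G=E_1(\tilde G-H^{-1}D^\top)$ is affine in the decision variables, the objective $\|G^\top\|_\alpha$ for $\alpha\in\{1,\infty\}$ can be handled by replicating the dual reformulation and binary linearization in the proof of Proposition~\ref{prop:MILP_matrix_norm}, turning the full problem into a single MILP.

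The main obstacle is the careful encoding of the parametric sensitivity via big-M constraints and, relatedly, the treatment of $x$ lying on the boundary between adjacent critical regions, at which several values of $\mu$ may be feasible simultaneously. This is benign because any such feasible $\mu$ still corresponds to the linear part of some adjacent full-dimensional critical region, whose norm is already captured in $\max_{\mc{A}}\|G_{\mc{A}}^\top\|_\alpha$, so the MILP optimum equals the sought Lipschitz constant.
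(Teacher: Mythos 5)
Your proposal is correct and follows the same overall architecture as the paper's proof: encode the active set of the KKT system of \eqref{eq:QP_final} with a binary vector and big-$M$ complementarity constraints, extract the local affine gain of $u_{\textrm{MPC}}$ as an affine function of the continuous decision variables, and then invoke Proposition~\ref{prop:MILP_matrix_norm} to turn the norm maximization into a single MILP. The one place where you genuinely diverge is the mechanism for recovering the gain matrix. The paper does \emph{not} write down the parametric sensitivity system for $\partial \bs{z}^\star/\partial x$; instead it introduces $n$ perturbed copies of the state, $x^i = x + e^i$ along the canonical basis, each with its own variables $\{\bs{z}^i,\mu^i,r^i\}$ forced to satisfy the KKT \emph{equalities} for the \emph{same} binary active set $\sigma$ (with the sign constraints on $r^i,\mu^i$ relaxed so the system stays feasible when $x+e^i$ leaves the region), and then recovers the gain by finite differences, $K(\sigma) = [\,u^1 \cdots u^n\,] - u_{\textrm{MPC}}(x)\otimes\bsone^\top$. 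Since the solution map is affine on each active set, your Jacobian system and the paper's finite-difference construction determine exactly the same matrix ($G_{\mc{A}} = E_1(\tilde G - H^{-1}D^\top) = K(\sigma)$), and both yield MILPs of comparable size ($n$ extra matrix-valued linear systems versus $n$ extra vector-valued KKT copies). Your route has the minor advantage of not needing the ``relax the sign constraints on the perturbed copies'' device, at the cost of having to justify big-$M$ bounds for $\tilde G,\tilde\Lambda$ directly (which you do, correctly, via finiteness of the set of critical regions). Your closing remark about boundary states where several active-set indicators are simultaneously feasible addresses a degeneracy that the paper handles only implicitly through Standing Assumption~\ref{standing:LICQ} and the uniqueness of multipliers; your resolution is at the same level of rigor as the paper's, so I do not count it as a gap.
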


\begin{proof}
Recalling that the \gls{QP} in \eqref{eq:QP_final} is assumed strictly convex and introducing a vector of nonnegative slacks $r \in \R_{\ge 0}^{p}$ and inequality multipliers $\mu \in \R_{\geq 0}^{p}$, for each $x \in \mc{X}$ the KKT conditions for \eqref{eq:QP_final} are
\begin{equation}\label{eq:KKT_1}
	\left\{
	\begin{aligned}
		& \bs{z} + H^{-1} N^\top \mu = 0,\\
		& N \bs{z} + r - d - S x = 0,\\
		& r \geq 0,\\
		& 0 \leq \mu \perp r.
	\end{aligned}
	\right.
\end{equation}

The complementarity condition can be rewritten by introduction of a vector of binary variables $\sigma\in\mathbb{B}^{p}$ such that $[\sigma_i = 1] \implies [r_i = 0]$; otherwise $r$ can take any value within its range $[0, \bar{r}]$ with upper bound $\bar{r} > 0$, guaranteed to exist since the primal modified \gls{QP} in \eqref{eq:QP_final} is assumed feasible.
For $x \in \mc{X}$ the conditions in \eqref{eq:KKT_1} then translate to
\begin{equation}\label{eq:KKT}
	\left\{
	\begin{aligned}
		& \bs{z} + H^{-1} N^\top \mu = 0,\\
		& N \bs{z} + r - d - S x = 0,\\
		& 0 \leq r  \leq \bar{r} (\bsone - \sigma) ,\\
		& 0 \leq \mu \leq \bar{\mu} \sigma.
	\end{aligned}
	\right.
\end{equation}
Assuming the primal modified \gls{QP} in \eqref{eq:QP_final} to be feasible likewise implies the existence of an upper bound for $\mu$, namely $\bar{\mu} > 0$, since $\mc{X}$ amounts to a bounded polyhedral set, hence compact.

Given Standing Assumption~\ref{standing:LICQ}, each critical region of active constraints is uniquely determined by a vector of active constraints (since the dual of the \gls{mp-QP} in \eqref{eq:QP_final} is strictly convex), the indices of which are encoded in $\sigma$.
In addition, solving the system \eqref{eq:KKT} for some $x$ yields the optimal control
\begin{equation}\label{eq:single_control_x}
	u_{\textrm{MPC}}(x) = C(\bs{z} - H^{-1} D^\top x),
\end{equation}
with selection matrix $C \eqdef [I \; 0 \,  \cdots \, 0]$. This corresponds to an affine law $u_{\textrm{MPC}}(x) = K(\sigma) x + c(\sigma)$, for some gain matrix $K(\sigma) \in \R^{m \times n}$ and vector $c(\sigma) \in \R^m$ unique to the particular set of active constraints encoded by $\sigma$, neither of which we have needed to characterize explicitly.

\begin{figure}[t!]
	\centering
	\ifTwoColumn
		\includegraphics[width=.9\columnwidth]{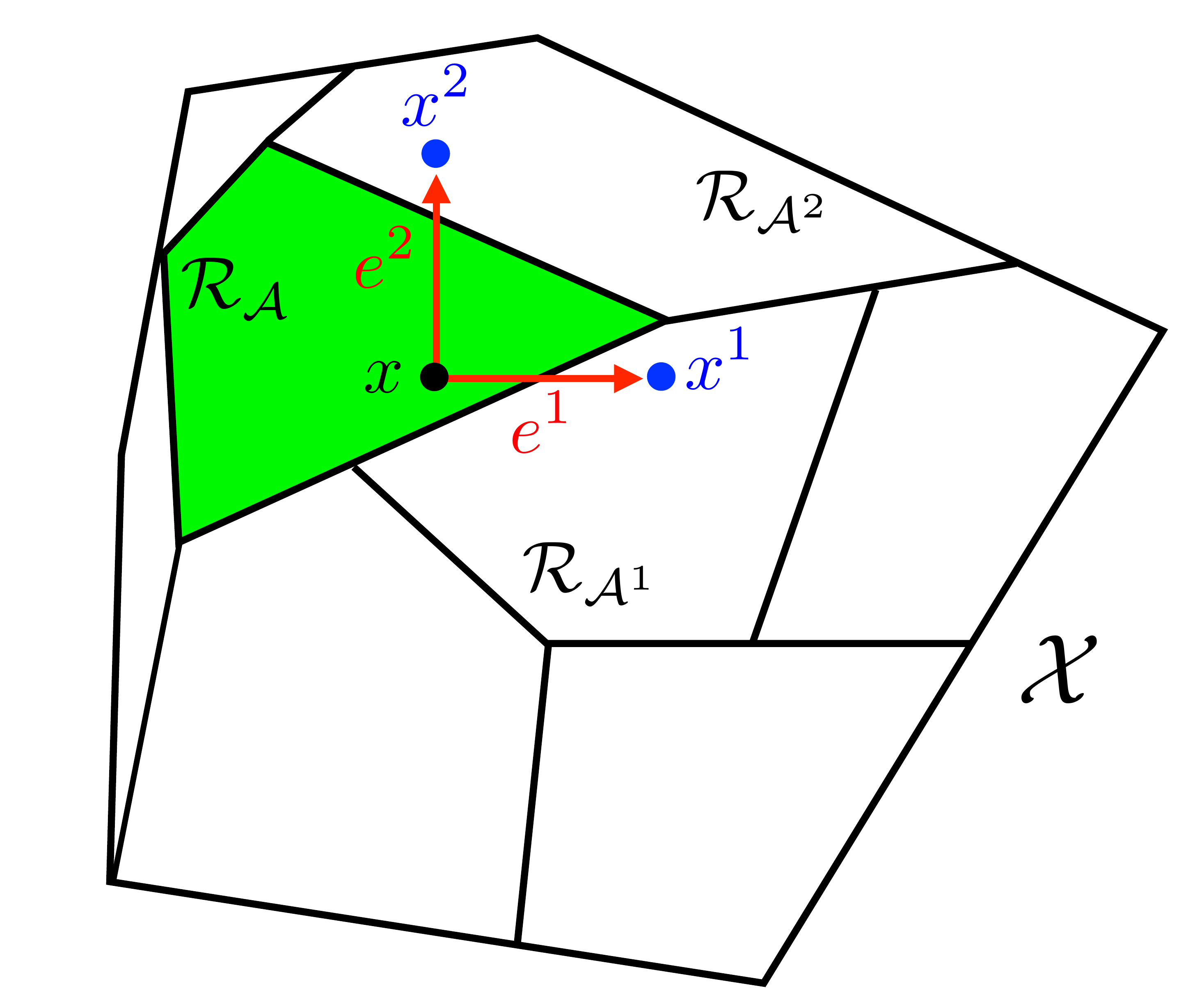}
	\else
		\includegraphics[width=.5\columnwidth]{active_region_MPC}
	\fi
	\caption{Two-dimensional schematic representation of the line of proof of Theorem~\ref{th:norm_comp}.}
	\label{fig:active_region_MPC}
\end{figure}

To compute $\mc{L}_{\alpha}(u_{\textrm{MPC}},\mc{X})$, $\alpha \in \{1, \infty\}$ without explicit calculation of $u_{\textrm{MPC}}(\cdot)$ across all of its partition regions, we first perturb $x$ along the canonical basis  vectors in $\R^n$ and consider how the optimal solution to the \gls{QP} in \eqref{eq:QP_final} varies, provided that the same set of active/inactive constraints is imposed, according to the binary vector $\sigma$ (see Fig.~\ref{fig:active_region_MPC} for an illustration).

We then introduce real auxiliary variables $\{x^i, \, \bs{z}^i, \, \mu^i, \, r^i\}$ for $i = 1, \ldots, n$, and additional \gls{MI} linear constraints as
\begin{equation}\label{eq:KKT_perturbed}
	\forall i \in \{1, \ldots, n\} : \left\{
	\begin{aligned}
		& x^i = x + e^i,\\
		& \bs{z}^i + H^{-1} N^\top \mu^i = 0,\\
		& N \bs{z}^i + r^i - d - S x^i = 0,\\
		& - \bar{r} (\bsone - \sigma) \leq r^i  \leq \bar{r} (\bsone - \sigma),\\
		& - \bar{\mu} \sigma \leq \mu^i \leq \bar{\mu} \sigma,
	\end{aligned}
	\right.
\end{equation}
	where $e^i \in \R^n$ is the $i$-th vector of the canonical basis.  The nonnegativity of both $r^i$ and $\mu^i$ is relaxed to guarantee the existence of a solution to \eqref{eq:KKT_perturbed}, since $x^i$ may fall within a critical region with active constraints that differ from those for $x$, as shown in Fig.~\ref{fig:active_region_MPC}. Note that only the state $x$, which serves as a parameter, is varied to obtain $x^i$, while the newly introduced $\{\bs{z}^i, \, \mu^i, \, r^i\}$ are additional decision variables, subject to the \gls{MI} linear constraints in \eqref{eq:KKT_perturbed}, which allow us to define
	\begin{equation}\label{eq:single_control}
		\forall i \in \{1, \ldots, n\} : u^i = C(\bs{z}^i - H^{-1} D^\top x^i).
	\end{equation}
	Observe that $u^i$ may differ from $u_{\textrm{MPC}}(x^i) = u_{\textrm{MPC}}(x + e^i)$, since the active set encoded by $\sigma$ for some $x$ may differ from the active set at the perturbed point $x + e^i$, particular for those $x$ near the boundary of their partition.   However, it still holds that $u^i = K(\sigma) x^i + c(\sigma)$, and hence that
\begin{align*}
\left[ u^1 \, \cdots \, u^n \right]
&= K(\sigma) \left[ x^1 \, \cdots \, x^n \right] + c(\sigma) \otimes \bsone^\top\\
&= K(\sigma) \left[x \otimes \bsone^\top + \left[ e^1 \, \cdots \, e^n \right]\right] + c(\sigma) \otimes \bsone^\top\\
&= \left(K(\sigma) x + c(\sigma)\right) \otimes \bsone^\top + K(\sigma )
\end{align*}
and we can isolate the gain term $K(\sigma)$ directly to obtain
\begin{equation}\label{eq:controller_expression}
K(\sigma) =  \left[ u^1 \, \cdots \, u^n \right]- u_{\textrm{MPC}}(x)\otimes \bsone^\top.
\end{equation}
The result is that we have constructed an expression for the controller gain in the critical region parametrized by some choice of $\sigma$, which can itself be computed numerically for any $x \in \mc{X}$ by solving the set of \gls{MI} linear constraints \eqref{eq:KKT}--\eqref{eq:single_control}.   Finally, combining all of the additional variables and constraints introduced in \eqref{eq:KKT}--\eqref{eq:controller_expression}, we can apply Proposition~\ref{prop:MILP_matrix_norm} to construct an \gls{MILP} in the spirit of \eqref{eq:norm_comp_MILP} to compute $\mc{L}_{\alpha}(u_{\textrm{MPC}},\mc{X})$, for some given $\alpha \in \{1, \infty\}$.
\end{proof}
In Appendix~\ref{app:maximal_gain_MPC} we look more closely at the procedure described in Theorem~\ref{th:norm_comp} and contrast numerically our proposed approach with the solution obtained via the \gls{MPC} Toolbox \cite{bemporad2021model}.


\section{Quantifying the approximation quality of\\ piecewise-affine neural networks}\label{sec:training_quality}

We can now develop computational results that ensure the stability of a \gls{ReLU}-based control policy $u_{\textrm{NN}}$ constructed based on approximation of a stabilizing \gls{MPC} law $u_{\textrm{MPC}}$.

Since the \gls{MPC} policy $u_{\textrm{MPC}}$ is designed to (exponentially) stabilize the \gls{LTI} system in \eqref{eq:LTI_sys} to the origin, then we may expect that the \gls{ReLU} based policy should also be stabilizing if the approximation error $e(\cdot) = u_\normaltext{\textrm{NN}}(\cdot) - u_\normaltext{\textrm{MPC}}(\cdot)$ is sufficiently small.  This error function is the difference of \gls{PWA} functions, and so also \gls{PWA} \cite[Prop.\ 1.1]{gorokhovik1994pwa}. Thus, it can similarly be shown to be bounded and Lipschitz continuous on $\mc{X}$, and we can therefore apply the results of \S \ref{sec:stability} to find conditions under which stability is preserved.  
We first develop some properties of the approximation error mapping $e(\cdot)$:

\begin{theorem}\label{th:error_comp}
	For $\alpha \in \{1, \infty\}$, the approximation error $e(\cdot) = u_\normaltext{\textrm{NN}}(\cdot) - u_\normaltext{\textrm{MPC}}(\cdot)$ has the following properties:
	\begin{itemize}
		\item[(i)] The maximal error $\normaltext{\textrm{max}}_{x \in \mc{X}} \|e(x)\|_\alpha \eqqcolon \bar{e}_{\alpha}$ can be computed by solving an \normaltext{\gls{MILP}};
		\item[(ii)] The Lipschitz constant $\mc{L}_\alpha(e, \mc{X})$ can be computed by solving an \normaltext{\gls{MILP}}. \hfill$\square$
	\end{itemize}

\end{theorem}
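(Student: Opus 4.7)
My plan is to piggyback on the machinery already assembled for Theorem~\ref{th:norm_comp}. The key fact is that both $u_\normaltext{\textrm{NN}}(\cdot)$ and $u_\normaltext{\textrm{MPC}}(\cdot)$ admit exact representations as the ``output'' of a system of \gls{MI} linear constraints in the variable $x$ together with two families of binary variables: a vector $\beta$ that encodes which \glspl{ReLU} are active (via a standard big-$M$ encoding of $\phi(\cdot)=\max(\cdot,0)$, with big-$M$ constants finite because $\mc{X}$ is bounded) and a vector $\sigma$ that encodes the active set in the \gls{mp-QP}~\eqref{eq:QP_final} as in~\eqref{eq:KKT}. Conditional on a feasible choice $(\sigma,\beta)$ both controllers are affine in $x$, so the error map $e(\cdot) = u_\normaltext{\textrm{NN}}(\cdot)-u_\normaltext{\textrm{MPC}}(\cdot)$ is affine on every cell of the joint refinement of the two partitions of~$\mc{X}$, a fact guaranteed by~\cite[Prop.~1.1]{gorokhovik1994pwa}.

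For part~(i), I would bundle the \gls{ReLU} big-$M$ constraints for $u_\normaltext{\textrm{NN}}(x)$ with the \gls{KKT}-based constraints in~\eqref{eq:KKT}, introduce an auxiliary variable $\delta \eqdef u_\normaltext{\textrm{NN}}(x) - u_\normaltext{\textrm{MPC}}(x)\in\R^m$ (trivially eliminable from the formulation), and then apply the construction of Lemma~\ref{lemma:MI_matrix_norm} to $\delta$ viewed as an $m\times 1$ ``matrix'': introduce dual multipliers, convert them to binary variables exactly as in~\eqref{eq:norm_comp_LP_dual}--\eqref{eq:norm_comp_MILP}, and absorb the bilinear product ``binary $\times$ affine'' by standard big-$M$ linearization. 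The result is an \gls{MILP} whose optimal value is $\bar{e}_\alpha$ for $\alpha\in\{1,\infty\}$.

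For part~(ii), by Lemma~\ref{lemma:pwa_lip} applied to the \gls{PWA} map $e(\cdot)$ we only need to maximize the $\alpha$-norm of its gain matrix across all cells of its partition. To avoid enumerating those cells, I would replicate verbatim the canonical-basis perturbation trick used in the proof of Theorem~\ref{th:norm_comp}: for $i=1,\dots,n$ introduce copies $\{x^i,\bs{z}^i,\mu^i,r^i\}$ together with copies $\{\beta^i\}$ of the \gls{ReLU} indicators, impose $x^i=x+e^i$, keep the binary selectors $\sigma,\beta$ fixed while relaxing the sign constraints on $r^i,\mu^i$ (as in~\eqref{eq:KKT_perturbed}) and the corresponding sign constraints on the \gls{ReLU} pre-activations in copy $i$, so that the affine continuations $u_\normaltext{\textrm{MPC}}^{i}$ and $u_\normaltext{\textrm{NN}}^{i}$ of the two laws restricted to the cell indexed by $(\sigma,\beta)$ are well-defined even when $x+e^i$ lies outside that cell. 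As in~\eqref{eq:controller_expression}, I then recover the joint gain as $K_e(\sigma,\beta) = \bigl[(u_\normaltext{\textrm{NN}}^{1}-u_\normaltext{\textrm{MPC}}^{1}),\dots,(u_\normaltext{\textrm{NN}}^{n}-u_\normaltext{\textrm{MPC}}^{n})\bigr] - e(x)\otimes\bsone^\top$, which depends affinely on the decision variables. Feeding this matrix expression into Proposition~\ref{prop:MILP_matrix_norm} yields the desired \gls{MILP}.

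The main obstacle I anticipate is justifying the perturbation argument in the NN half of part~(ii): I must argue that relaxing the \gls{ReLU} sign constraints in the perturbed copies — while simultaneously relaxing $\mu^i,r^i$ in the \gls{MPC} copies — still produces the correct affine continuation of $u_\normaltext{\textrm{NN}}-u_\normaltext{\textrm{MPC}}$ restricted to the cell selected by $(\sigma,\beta)$. Standing Assumption~\ref{standing:NN_general_position} is critical here, since it ensures that each activation pattern $\beta$ corresponds to a full-dimensional cell on which the network is a single affine piece, so the perturbed outputs $u_\normaltext{\textrm{NN}}^{i}$ genuinely sit on the affine extension of that piece. A secondary, more prosaic concern is the choice of tight big-$M$ constants so that both the \gls{ReLU} encoding and the bilinear linearizations are numerically well-posed over the bounded polytope $\mc{X}$; this is handled by pre-computing layer-wise interval bounds on the pre-activations, and does not affect the MILP structure.
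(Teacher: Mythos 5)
Your proof of part (i) is essentially the paper's own: encode $u_{\textrm{MPC}}(x)$ through the \gls{KKT}-based \gls{MI} constraints \eqref{eq:KKT} and \eqref{eq:single_control_x}, encode $u_{\textrm{NN}}(x)$ through a big-$M$ model of the \gls{ReLU} activations, and maximize the norm of the difference via the machinery of Lemma~\ref{lemma:MI_matrix_norm} and Proposition~\ref{prop:MILP_matrix_norm}.

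Part (ii) is also correct, but you take a genuinely different route for the network half of the gain. The paper obtains $K_{\textrm{NN}}(x)$ \emph{explicitly} from the chain rule, $K_{\textrm{NN}}(x)=W^L\Delta^{L-1}W^{L-1}\cdots\Delta^0W^0$, and encodes the cascade $Y^{j+1}=W^{j+1}\Delta^jY^j$ entrywise through the \gls{MI} inequalities \eqref{eq:third_MI}, following \cite[Appendix~D]{jordan2020exactly}; only the \gls{MPC} gain is recovered by the canonical-basis perturbation \eqref{eq:controller_expression}. You instead extend the perturbation trick to the network as well: $n$ additional continuous forward-pass copies with the activation pattern $\beta$ frozen and the sign constraints \eqref{eq:first_MI} dropped, so that each $u^i_{\textrm{NN}}$ lies on the affine continuation of the selected piece, and the joint gain $K_e$ is read off by the same finite-difference identity. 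Your algebra $u^i_{\textrm{NN}}-u^i_{\textrm{MPC}}=e(x)+K_e e^i$ checks out, and the frozen-pattern forward pass is uniquely determined once $\beta$ and $x^i$ are fixed, so the construction is sound. What each approach buys: yours treats the two maps uniformly and avoids encoding products of weight matrices with activation masks, at the cost of $n$ replicated copies of the network's continuous variables and of enlarged interval bounds for the big-$M$ constraints on the continued pre-activations (which you correctly flag); the paper's version keeps a single copy of the network but needs the auxiliary matrices $Y^j$ and the extra inequalities \eqref{eq:third_MI}, and has the advantage of being directly compatible with the existing Lipschitz-constant \glspl{MILP} of \cite{jordan2020exactly}, which is the compatibility the authors emphasize throughout. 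One small caveat on emphasis: Standing Assumption~\ref{standing:NN_general_position} is not really what licenses the affine continuation (that only needs the output to be affine conditional on $\beta$); it is what ensures that the activation patterns ranged over by the \gls{MILP} correspond to genuine full-dimensional pieces so that Lemma~\ref{lemma:pwa_lip} applies — a point your argument needs in exactly the same way as the paper's.
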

\begin{proof}
	i) From the proof of Theorem~\ref{th:norm_comp} it follows that $u_{\textrm{MPC}}(\cdot)$ can be computed via the linear expression in \eqref{eq:single_control_x}, provided that $x \in \mc{X}$ and $\bs{z}$, along with assorted other auxiliary variables, satisfy the \gls{MI} linear inequalities in \eqref{eq:KKT}.

	Given the recurrence relation in \eqref{eq:RELU_NN}, the output $F(\cdot)$ of a \gls{ReLU} network in general position can likewise be modelled as some combination of variables satisfying a collection of state-dependent \gls{MI} linear inequalities \cite{fischetti2018deep} (this follows also from \cite[Lemma~1 and 2]{jordan2020exactly}). In fact, given a trained \gls{ReLU} network (i.e., for assigned matrices and vectors $\{(W^{j}, b^{j})\}_{j = 0}^L$), for each $j \in \{0, \ldots, L-1\}$, the internal ``state'' of the \gls{NN} in \eqref{eq:RELU_NN} can be rewritten as $x^{j +1} = \Delta^j (W^j x^j + b^j)$, where the diagonal matrix $\Delta^j \eqdef \textrm{diag}((\delta_i^j)_{i = 1}^{n_{j+1}}) \in \mathbb{B}^{n_{j+1} \times n_{j+1}}$ is such that each element satisfies $[\delta_i^j = 1] \iff [W_i^j x^j + b_i^j \geq 0]$, for all $i \in \{1, \ldots, n_{j + 1}\}$. This logical implication translates into a set of \gls{MI} linear constraints, for an arbitrary small tolerance $\varepsilon > 0$ and appropriate lower/upper bounds $\underline{b}^{j} \leq \bar{b}^j$ of $W_i^j x^j + b_i^j$:
	\begin{equation}\label{eq:first_MI}
		\left\{
			\begin{aligned}
				&-\underline{b}^{j} \delta_i^j  \leq W_i^j x^j + b_i^j - \underline{b}^{j} ,\\
				&(\bar{b}^j + \varepsilon) \delta_i^j \geq W_i^j x^j + b_i^j + \varepsilon.
			\end{aligned}
		\right.
	\end{equation}
	In addition, the bilinear product between the binary variable and the continuous term arising in $x^{j +1}$ can be translated into \gls{MI} linear inequalities by introducing a real auxiliary variable $q^j \in \R^{n_{j+1}}$ satisfying $[\delta^j_i = 0] \implies [q^j_i = 0]$, and $[\delta^j_i = 1] \implies [q^j_i = W_i^j x^j + b_i^j]$, for all $i \in \{1, \ldots, n_{j + 1}\}$. Both of these logical implications translate into \gls{MI} inequalities:
	\begin{equation}\label{eq:second_MI}
		\left\{
		\begin{aligned}
			& \underline{b}^{j} \delta^j_i \leq q^j_i\leq  \bar{b}^{j} \delta^j_i,\\
			& -  \bar{b}^{j} (1 - \delta^j_i) \leq q^j_i - W_i^j x^j - b_i^j\leq -\underline{b}^{j} (1- \delta^j_i).
		\end{aligned}
		\right.
	\end{equation}
	We therefore have $x^{j +1} = q^j$, where the auxiliary variable $q^j$, along with the binary variable $\delta_i^j$, is subject to the \gls{MI} linear inequalities \eqref{eq:first_MI}--\eqref{eq:second_MI}, for all $j \in \{0, \ldots, L-1\}$.  For a given input $x \in \mc{X}$ applied to the \gls{NN}, we therefore have $u_{\textrm{NN}}(x) = W^{L} x^{L} + b^{L} = W^{L} q^{L-1} + b^{L}$, i.e.\ the \gls{NN} can be written as an affine combination of a continuous variable (i.e., $q^{L-1}$) that is required to satisfy some \gls{MI} linear constraints.
	For any norm $\alpha \in \{1,\infty\}$, computing the maximum error $\textrm{max}_{x \in \mc{X}} \|e(x)\|_\alpha = \textrm{max}_{x \in \mc{X}} \|u_{\textrm{NN}}(x) - u_{\textrm{MPC}}(x)\|_\alpha$, then amounts to an \gls{MILP}, since a vector norm maximization problem is a special case of  Lemma~\ref{lemma:MI_matrix_norm} and Proposition~\ref{prop:MILP_matrix_norm}.

	\smallskip

	ii) The approximation error $e(\cdot)$ is a \gls{PWA} mapping \cite[Prop.\ 1.1]{gorokhovik1994pwa}, and from Lemma~\ref{lemma:pwa_lip} its Lipschitz constant coincides with
	\[
	\mc{L}_\alpha(e, \mc{X}) = \underset{x \in \mc{X}}{\textrm{max}} \, \|K_\textrm{e}(x)\|_\alpha =\underset{x \in \mc{X}}{\textrm{max}} \, \|K_\textrm{NN}(x) - K_\textrm{MPC}(x)\|_\alpha,
	\]
	where $K_\textrm{NN}(\cdot)$ is the local linear gain of the \gls{ReLU} network \eqref{eq:RELU_NN}, and $K_\textrm{MPC}(\cdot)$ that of the \gls{MPC} policy \eqref{eq:QP_init}. For $\alpha \in \{1, \infty\}$, the claim follows by relying on Proposition~\ref{prop:MILP_matrix_norm} after noting that:
	\begin{itemize}
		\item $K_\textrm{MPC}(\cdot)$ has the locally linear expression in \eqref{eq:controller_expression} subject to state-dependent \gls{MI} linear constraints \eqref{eq:KKT}--\eqref{eq:single_control};
		\item the Jacobian of $F(\cdot)$ over $\mc{X}$, $\partial F(\mc{X})$, can be encoded as an affine combination of both continuous and binary variables subject to \gls{MI} inequalities \cite[Appendix~D]{jordan2020exactly}.
	\end{itemize}
	Although \eqref{eq:controller_expression} does not directly model each entry of the matrix $K_\textrm{MPC}(\cdot)$, the expression for $K_\textrm{MPC}(\cdot)$ is compatible with the one available in the machine learning literature for $K_\textrm{NN}(\cdot)$, as discussed for instance in \cite[Appendix~D]{jordan2020exactly} or in \cite{fischetti2018deep}. Given any $x \in \mc{X}$, we indeed note that
	 applying the chain rule for the derivative to the recurrence  \eqref{eq:RELU_NN} with $x^{j +1} = \Delta^j (W^j x^j + b^j)$, $j \in \{0, \ldots, L-1\}$, leads to
	$$
		K_\textrm{NN}(x) = W^L \Delta^{L-1} W^{L-1} \cdots \Delta^{0} W^{0},
	$$
	where the dependence on $x$ comes via the diagonal matrix $\Delta^0$,  and in cascade by any $\Delta^j$, whose elements are subject to \eqref{eq:first_MI}. Define auxiliary matrices $Y^{j+1} \eqdef W^{j+1} \Delta^j Y^j \in \R^{n_{j+2} \times n}$, for all $j \in \{0, \ldots, L-1\}$, with $Y^0 \eqdef W^0 \in \R^{n_1 \times n}$.  Since any row of the matrix $\Delta^j W^j = \col((\delta^j_i W^{j}_i)_{i = 1}^{n_{j+1}}) \eqqcolon Z^j \in \R^{n_{j+1} \times n_j}$ coincides with the associated row of $W^{j}$ only if $\delta^{j}_i = 1$ ($0$ otherwise), any entry of $Z^j$ satisfies the set of \gls{MI} linear inequalities, for bounds $\underline{w}_{h,k}^j \leq \bar{w}_{h,k}^j$ on $w_{h,k}^j$
	\begin{equation}\label{eq:third_MI}
		\left\{
			\begin{aligned}
				& \underline{w}_{h,k}^j \delta^j_h \leq z^j_{h,k}\leq  \bar{w}_{h,k}^j \delta^j_h,\\
				& -  \bar{w}_{h,k}^j (1 - \delta^j_h) \leq z^j_{h,k}- w_{h,k}^j,\leq -\underline{w}_{h,k}^j (1- \delta^j_h),
			\end{aligned}
		\right.
	\end{equation}
	it turns out that $K_\textrm{NN}(x) = Y^{L}$, subject to \eqref{eq:first_MI} and \eqref{eq:third_MI} for all $j \in \{0, \ldots, L-1\}$.
	The local linear gain $K_\textrm{NN}(\cdot)$ of a \gls{ReLU} network in general position can hence be computed via Proposition~\ref{prop:MILP_matrix_norm}, and this concludes the proof.
\end{proof}


Theorem~\ref{th:error_comp} provides an offline, optimization-based procedure to compute exactly both the worst-case approximation error between the \gls{PWA} mappings associated with the \gls{ReLU} network in \eqref{eq:RELU_NN} and the \gls{MPC} law in \eqref{eq:QP_init}, as $\|e(x)\|_\alpha  \leq \bar{e}_{\alpha}$, for all $x \in \mc{X}$, and the associated Lipschitz constant over $\mc{X}$, $\mc{L}_\alpha(e, \mc{X})$, for $\alpha \in \{1, \infty\}$.   These quantities are precisely of the type required to apply the stability results of \S \ref{sec:stability}.

Note that Lemma~\ref{lemma:NN_stability_local} and Theorem~\ref{th:exp_conv} require one to evaluate $\mc{L}_\alpha(e, \mc{X}_\infty)$, in contrast with Theorem~\ref{th:error_comp}.(ii) which provides a method to compute $\mc{L}_\alpha(e, \mc{X})$ (or $\mc{L}_\alpha(e, \Omega_b)$ in view of Remark~\ref{remark:omega_b}).  However, $\mc{L}_\alpha(e, \mc{X}_\infty)$ can still be computed by means of the same optimization-based procedure, replacing $\mc{X}$ with $\mc{X}_\infty$ everywhere.  It is known that the polytopic set $\mc{X}_\infty$ can be computed exactly via, e.g., the procedure in \cite{gilbert1991linear}.

Since $K_\textrm{MPC}(x) = \bar{K}_\textrm{MPC}$ for all $x \in \mc{X}_\infty$, $\mc{L}_\alpha(e, \mc{X}_\infty)$ can also be obtained by solving the \gls{MILP} $\textrm{max}_{x \in \mc{X}_\infty} \, \|K_\textrm{NN}(x) - \bar{K}_\textrm{MPC}\|_\alpha$, where the unconstrained optimal gain $\bar{K}_\textrm{MPC}$ can be computed offline through a least-square approach  \cite[\S 6.1.1]{rawlings2017model}.
In both cases, however, at least an estimate of the set $\mc{X}_\infty$ is required.

Alternatively, one could simply employ $\mc{L}_\alpha(e, \mc{X})$ directly in \eqref{eq:upper_bound_lipschitz} in place of $\mc{L}_\alpha(e, \mc{X}_\infty)$, since $\mc{X}_\infty \subseteq \mc{X}$ implies $\mc{L}_\alpha(e, \mc{X}_\infty) \leq \mc{L}_\alpha(e, \mc{X})$.  This  comes at the cost of greater conservatism however, potentially leading to design a \gls{ReLU}-based controller with greater complexity than is required.


\section{Discussion of reliably-stabilizing\\ \gls{PWA-NN} controllers}\label{sec:discussion}

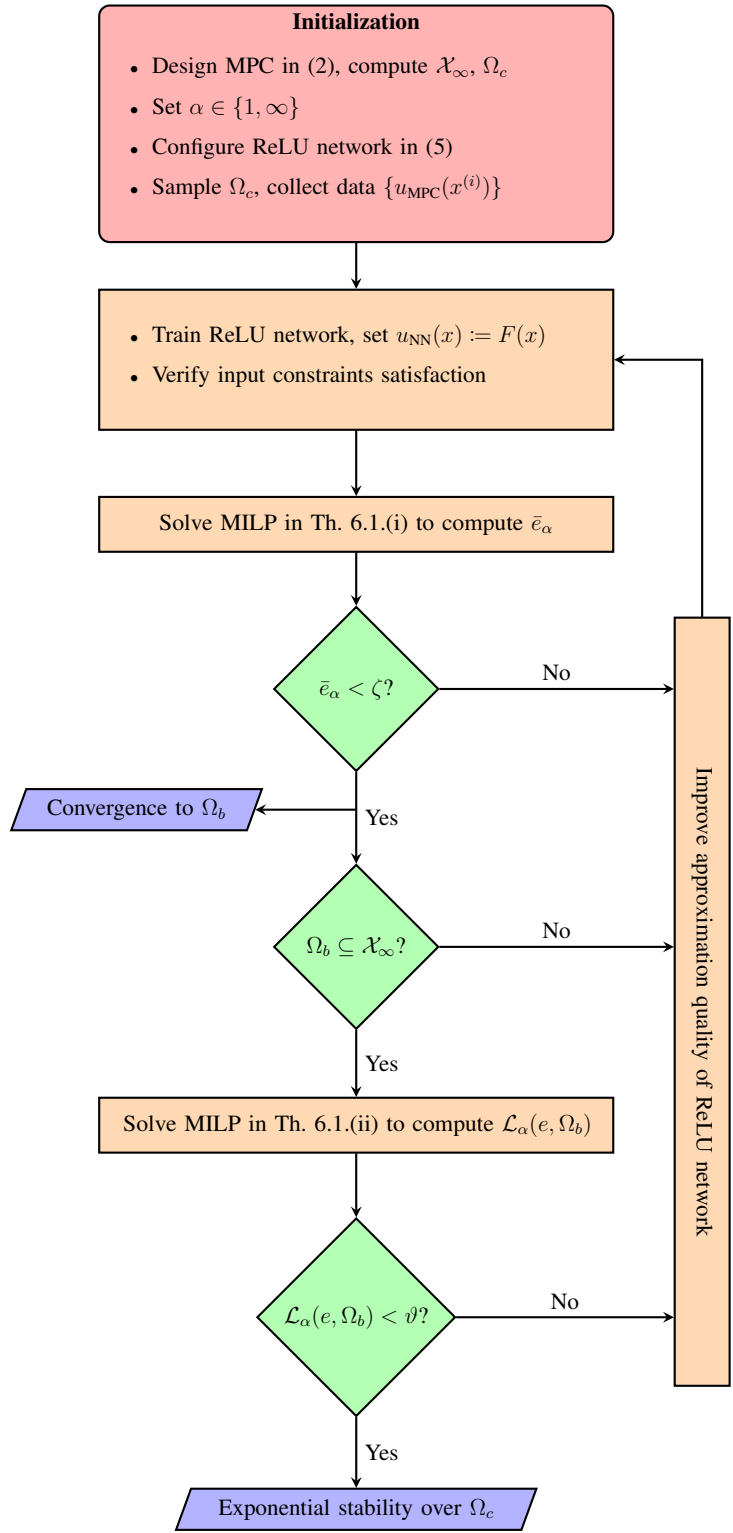
\begin{figure}[t!]
	\centering
	\ifTwoColumn
	\begin{tikzpicture}[thick, every node/.style={scale=0.72}, scale=0.72, node distance=2cm]
			\node (init) [startstop] { \textbf{Initialization}
					\begin{itemize}
								\item Design \gls{MPC} in \eqref{eq:QP_init}, compute $\mc{X}_\infty$, $\Omega_c$
								\item Set $\alpha \in \{1, \infty\}$
								\item Configure \gls{ReLU} network in \eqref{eq:RELU_NN}
								\item Sample $\Omega_c$, collect data $\{u_\textrm{MPC}(x^{(i)})\}$
					\end{itemize}
				};
			\node (pro1) [process, below of=init, yshift=-.8cm] {\begin{itemize}
					\item Train \gls{ReLU} network, set $u_{\textrm{NN}}(x) \eqdef F(x)$
					\item Verify input constraints satisfaction
				\end{itemize}
				\vspace{4cm}};
			\node (pro2) [process, below of=pro1, yshift=-.3cm] {Solve \gls{MILP} in Th.~\ref{th:error_comp}.(i) to compute $\bar{e}_\alpha$};
			\node (dec1) [decision, below of=pro2, yshift=-1.3cm] {$\bar{e}_\alpha < \zeta$?};
   			\node (improve) [rectangle, text centered, draw=black, fill=orange!30, minimum height = 14cm, minimum width = 1cm, right of=dec1, xshift=3.8cm, yshift = -5.75cm] {\rotatebox{-90}{Improve approximation quality of \gls{ReLU} network}};
			\node (k) [inner sep=0,minimum size=0,below of=dec1, yshift=-.4cm]{};
			\node (dec2) [decision, below of=k, yshift=-.5cm] {$\Omega_b \subseteq \mc{X}_\infty$?};
			\node (pro2b) [process, below of=dec2, yshift=-1.3cm] {Solve \gls{MILP} in Th.~\ref{th:error_comp}.(ii) to compute $\mc{L}_\alpha(e, \Omega_b)$};
			\node (dec3) [decision, below of=pro2b, yshift=-1.3cm] {$\mc{L}_\alpha(e, \Omega_b) < \vartheta$?};
			\node (exp_stab) [io, below of=dec3, yshift=-1cm] {Exponential stability over $\Omega_c$};

			\node (conv) [io, left of=k, xshift=-1.5cm]{Convergence to $\Omega_b$};

			\draw [arrow] (init) -- (pro1);
			\draw [arrow] (pro1) -- (pro2);
			\draw [arrow] (pro2) -- (dec1);
			\draw [arrow] (dec1) -- node[anchor=west] {Yes} (dec2);
			\draw [arrow] (dec2) -- node[anchor=west] {Yes} (pro2b);
			\draw [arrow] (pro2b) -- (dec3);
			\draw [arrow] (dec3) -- node[anchor=west] {Yes} (exp_stab);
			\draw [arrow] (k) -- (conv);
			\draw[arrow] (dec1) -- node[anchor=south] {No} (dec1 -| improve.west);
			\draw[arrow] (dec2) -- node[anchor=south] {No} (dec2 -| improve.west);
			\draw[arrow] (dec3) -- node[anchor=south] {No} (dec3 -| improve.west);
			\draw[arrow] (improve.north) |-  (pro1.east);
		\end{tikzpicture}
	\else
		\tikzstyle{startstop} = [rectangle, rounded corners, minimum width=2.5cm, text width=.55\columnwidth,  minimum height=1cm, text centered, draw=black, fill=red!30]
		\tikzstyle{process} = [rectangle, minimum width=1cm, minimum height=1cm, text width=.55\columnwidth, text centered, draw=black, fill=orange!30]
			\begin{tikzpicture}[thick, every node/.style={scale=0.73}, scale=0.73, node distance=2.5cm]
			\node (init) [startstop] { \textbf{Initialization}
					\begin{itemize}
						\item Design \gls{MPC} in \eqref{eq:QP_init}, compute $\mc{X}_\infty$, $\Omega_c$
						\item Set $\alpha \in \{1, \infty\}$
						\item Configure \gls{ReLU} network in \eqref{eq:RELU_NN}
						\item Sample $\Omega_c$, collect data $\{u_\textrm{MPC}(x^{(i)})\}$
						\item[]
					\end{itemize}
			};
			\node (pro1) [process, below of=init, yshift=-1.8cm] {

				\begin{itemize}
					\item Train \gls{ReLU} network, set $u_{\textrm{NN}}(x) \eqdef F(x)$
					\item Verify input constraints satisfaction
					\item[]
				\end{itemize}
				};
			\node (pro2) [process, below of=pro1, yshift=-.5cm] {Solve \gls{MILP} in Th.~\ref{th:error_comp}.(i) to compute $\bar{e}_\alpha$};
			\node (dec1) [decision, below of=pro2, yshift=-.5cm] {$\bar{e}_\alpha < \zeta$?};
			\node (improve) [rectangle, text centered, draw=black, fill=orange!30, minimum height = 14cm, minimum width = 1cm, right of=dec1, xshift=3.8cm, yshift = -5.7cm] {\rotatebox{-90}{Improve approximation quality of \gls{ReLU} network}};
			\node (k) [inner sep=0,minimum size=0,below of=dec1, yshift=.3cm]{};
			\node (dec2) [decision, below of=k, yshift=-.cm] {$\Omega_b \subseteq \mc{X}_\infty$?};
			\node (pro2b) [process, below of=dec2, yshift=-0.75cm] {Solve \gls{MILP} in Th.~\ref{th:error_comp}.(ii) to compute $\mc{L}_\alpha(e, \Omega_b)$};
			\node (dec3) [decision, below of=pro2b, yshift=-1cm] {$\mc{L}_\alpha(e, \Omega_b) < \vartheta$?};
			\node (exp_stab) [io, below of=dec3, yshift=-1cm] {Exponential stability over $\Omega_c$};

			\node (conv) [io, left of=k, xshift=-1.5cm]{Convergence to $\Omega_b$};

			\draw [arrow] (init) -- (pro1);
			\draw [arrow] (pro1) -- (pro2);
			\draw [arrow] (pro2) -- (dec1);
			\draw [arrow] (dec1) -- node[anchor=west] {Yes} (dec2);
			\draw [arrow] (dec2) -- node[anchor=west] {Yes} (pro2b);
			\draw [arrow] (pro2b) -- (dec3);
			\draw [arrow] (dec3) -- node[anchor=west] {Yes} (exp_stab);
			\draw [arrow] (k) -- (conv);
			\draw[arrow] (dec1) -- node[anchor=south] {No} (dec1 -| improve.west);
			\draw[arrow] (dec2) -- node[anchor=south] {No} (dec2 -| improve.west);
			\draw[arrow] (dec3) -- node[anchor=south] {No} (dec3 -| improve.west);
			\draw[arrow] (improve.north) |-  (pro1.east);
		\end{tikzpicture}
	\fi
	\vspace{.4cm}
	\caption{Roadmap for using the proposed results.}
	\label{fig:flowchart}
\end{figure}

Having established that a \gls{ReLU}-based approximation of an \gls{MPC} law guarantees stability if the optimal value of two \glspl{MILP} satisfy certain conditions, we now make some observations and practical suggestions on how to use our results.

\subsection{A user's guide for \gls{PWA-NN} controllers}
Figure~\ref{fig:flowchart} shows a roadmap describing a sequence of decisions involving the results developed in this paper for approximating an \gls{MPC} policy with a \gls{ReLU} network of reasonably low complexity while preserving stability. Specifically, the network complexity is characterized by its \emph{depth} $L$ (the number of hidden layers) and \emph{width} $N$ (the number of neurons).

The initialization step designs an \gls{MPC} controller in \eqref{eq:QP_init} satisfying the conditions of Standing Assumption~\ref{standing:mp-QP}, and computes related quantities. In addition, one has to fix the structure of a \gls{ReLU} network for training collect a certain dataset of samples, $\{u_\textrm{MPC}(x^{(i)})\}$, which is then used to train the \gls{ReLU} network. Note that the data collection phase can easily be done offline by solving the \gls{mp-QP} in \eqref{eq:QP_init} for a collection of state samples $x^{(i)}$ obtained by, e.g., uniformly sampling or gridding $\Omega_c$. Before solving the \gls{MILP} obtained by combining Theorem~\ref{th:error_comp}.(i) and Proposition~\ref{prop:MILP_matrix_norm}, it is crucial to verify whether the resulting controller $u_\textrm{NN}(\cdot)$ is able to generate safe inputs satisfying the constraints (further discussion follows in \S \ref{subsec:input_cons}).
Thus, if the worst-case approximation error meets the condition in \eqref{eq:error_bound} for some value of $\rho$ (see Lemma~\ref{lemma:NN_stability} and related proof), then the \gls{PWA-NN} controller $u_\textrm{NN}(\cdot)$ guarantees that the \gls{LTI} system in \eqref{eq:LTI_sys} is \gls{ISS} and converges to some neighbourhood of the origin $\Omega_b \subset \Omega_c$ exponentially fast. Otherwise, one needs to improve the \gls{ReLU}-based approximation of the \gls{MPC} law.

Due to the many types of \glspl{NN} available, it is challenging to devise a rigorous procedure for improving the approximation quality that holds in general. For this reason, one can find in the literature a variety of empirical recommendations that are not specific to a given type of \gls{NN} or predictive modelling problem, e.g.\ as in \cite{bengio2012practical}.
As a general guidelines, it has been observed in practice that one can achieve better approximation through some combination of increasing the pool of sample points and increasing the complexity of the \gls{ReLU} network structure (in particular, its width) with the same size for all layers. Preparing data prior to modelling by, e.g., standardizing and removing correlations, has also been shown to be beneficial, as well as adopting regularization terms while training the underlying \gls{ReLU} \gls{NN}. 
Note that as a by-product of the \normaltext{\gls{MILP}} in Theorem~\ref{th:error_comp}.(i), one obtains the state associated to the computed worst-case approximation error. A reasonable choice is hence to include that sample upon re-training the~network. 
Since our methodology provides a way to asses the training quality of a given \gls{ReLU} network in replicating the control action of an \gls{MPC} policy, the message conveyed here is that, in case the condition in \eqref{eq:error_bound} is not met, one has to make the worst-case approximation error smaller by implementing a strategy to improve the approximation quality of the \gls{ReLU} network. This will also necessarily require one to re-train the network, and eventually verify input constraints satisfaction.

Then, to design a \gls{PWA-NN} controller that also guarantees exponential convergence to the origin, one has to first look for some pair $(b,\rho)$ such that $b \leq c$ and $\rho \in (\bar{\rho},1)$ for which the inclusion $\Omega_b \subseteq \mc{X}_\infty$ holds, according to Theorem~\ref{th:exp_conv}. Here, $\bar{\rho}$ amounts to the smallest value of $\rho$ for which the condition \eqref{eq:error_bound} is met.
Finally, by solving the \gls{MILP} described in Theorem~\ref{th:error_comp}.(ii), if the condition in Lemma~\ref{lemma:NN_stability_local} is met (possibly with $\mc{L}_\alpha(e, \Omega_b)$ in place of $\mc{L}_\alpha(e, \mc{X}_\infty)$), then $u_\textrm{NN}(\cdot)$ exponentially stabilizes the \gls{LTI} system in \eqref{eq:LTI_sys}. Otherwise, a tailored procedure for improving  the \gls{ReLU}-based approximation of the \gls{MPC} law must be adopted, and the overall process repeated.

\subsection{Accommodating input constraints}\label{subsec:input_cons}
While $u_{\normaltext{\textrm{NN}}}(\cdot)$ guarantees state constraint satisfaction for any initial state $x(0) \in \Omega_c$, the input constraints may not be satisfied. This issue can be rectified in several ways either \emph{before} or \emph{after} applying our methodology and without affecting the proposed results, since they hold for any trained \gls{ReLU} network no matter how the (post-)training is actually performed.

In view of the discussion in \S \ref{sec:stability}, we note that the design of some approximate \gls{MPC} law satisfying input constraints can even be enforced during the design of the original \gls{MPC} scheme in \eqref{eq:QP_init} by focusing on the augmented state variable $\hat{x} \eqdef \col(x, \bar{u})$. This latter evolves according to the dynamics
\begin{equation}\label{eq:augmented_dyn}
	\hat{x}^+ =
	\begin{bmatrix}
		A & 0\\
		0 & 0
	\end{bmatrix} \hat{x} + \begin{bmatrix}
		B\\
		I
	\end{bmatrix} u \, ,
\end{equation}
and hence enables us to incorporate input constraints as state ones directly, since $\hat{x} \in \mc{X} \times \mc{U}$. With \eqref{eq:augmented_dyn} in place of \eqref{eq:LTI_sys}, replicating \emph{mutatis mutandis} the discussion in \S \ref{sec:stability}  allows one to establish the robust positively invariance of some set $\hat{\Omega}_c \subseteq \mc{X} \times \mc{U}$ for the underlying perturbed dynamics, and hence for \eqref{eq:augmented_dyn} with approximated \normaltext{\gls{MPC}} policy $u_{\normaltext{\textrm{NN}}}(\cdot)$. Thus, for any initial state $\hat{x}(0) \in \hat{\Omega}_c$ the resulting trajectory will satisfy both state and input constraints for all $k \in \N$, at the price of introducing some conservativism on the bounds $\zeta$ and $\vartheta$ characterizing worst-case error and Lipschitz constant.

Approaches to enforce input constraints directly during the training phase
of a \gls{NN} are also available in the literature. For example, \cite{markolf2021polytopic} proposes a way of determining the weights of a \gls{NN} that guarantee the satisfaction of input constraints, while \cite{ICLR16-hausknecht} describes a reinforcement learning approach that manipulates the gradient of $u_\textrm{NN}(\cdot)$ with respect to the network parameters as the output nears constraint violation at some sample $x^{(i)} \in \Omega_c$.
Another possibility is to study the reachable set of a trained \gls{NN} through output verification techniques \cite{NEURIPS2018_be53d253,9301422,karg2020stability}.
It has been shown, for example, that the satisfaction of polytopic constraints involving the output of a \gls{NN} can be ensured a-priori by certain properties of common activation functions. Among them, \glspl{ReLU} were used in \cite{9301422} for this purpose, thus requiring one to solve a convex program to check whether or not a \gls{NN} output falls within a desired set.

A further possible approach has been explored in \cite{chen2018approximating,karg2020efficient,paulson2020approximate}, where the output of a trained \gls{ReLU} network is systematically projected through a Dykstra’s projection algorithm onto the polytopic set of feasible control actions parametrized by the current state.
Following this idea, one could even drop the verification of the input constraints satisfaction in the first step of the flowchart in Fig.~\ref{fig:flowchart}, and \emph{after} having verified conditions \eqref{eq:error_bound} and \eqref{eq:upper_bound_lipschitz}, implement $u = \textrm{proj}_{\mc U} (u_{\normaltext{\textrm{NN}}}(x))$ directly to stabilize \eqref{eq:LTI_sys}
In fact, we note that the stability analysis involving the perturbed system \eqref{eq:perturbed_dyn} holds for any state-dependent disturbance $e(x) = u_{\normaltext{\textrm{NN}}}(x) - u_{\normaltext{\textrm{MPC}}}(x)$ bounded in some norm $\alpha \in \{1, \infty\}$ by $\bar{e}_\alpha$. Since the projection mapping is (firmly) nonexpansive \cite[Cor.~12.20]{rockafellar2009variational} and that $u_{\normaltext{\textrm{MPC}}}(x) = \textrm{proj}_{\mc U} (u_{\normaltext{\textrm{MPC}}}(x))$, we have
$$
\begin{aligned}
	&\textrm{max}_{x \in \mc{X}}  \, \|u_{\normaltext{\textrm{NN}}}(x) - u_{\normaltext{\textrm{MPC}}}(x)\|_\alpha \le \bar{e}_\alpha,\\
	& \implies \|\textrm{proj}_{\mc U}(u_{\normaltext{\textrm{NN}}}(x) - u_{\normaltext{\textrm{MPC}}}(x))\|_\alpha \le \bar{e}_\alpha, \text{ for all } x \in \mc{X},\\
	& \implies \|\textrm{proj}_{\mc U}(u_{\normaltext{\textrm{NN}}}(x)) - u_{\normaltext{\textrm{MPC}}}(x)\|_\alpha \le \bar{e}_\alpha, \text{ for all } x \in \mc{X}.
\end{aligned}
$$
Therefore, the theory developed in \S \ref{sec:stability} supports the safe implementation of $u = \textrm{proj}_{\mc U} (u_{\normaltext{\textrm{NN}}}(x))$ while guaranteeing the stabilization of the considered \gls{LTI} system. Similar arguments can be adopted also to show that having verified the condition in Lemma~\ref{lemma:NN_stability_local} implies that the same condition is satisfied when $u_{\normaltext{\textrm{NN}}}(\cdot)$ is replaced by $\textrm{proj}_{\mc U} (u_{\normaltext{\textrm{NN}}}(\cdot))$. We finally remark that in case $\mc{U}$ identifies box constraints, as very frequently happens in practise, $\textrm{proj}_{\mc U}(\cdot)$ simply reduces to a saturation.

\section{Numerical simulations}\label{sec:simulations}
\begin{figure*}[!t]
	\centering
	\ifTwoColumn
		\includegraphics[trim=0cm 11cm 0cm 9cm,clip,width=\textwidth]{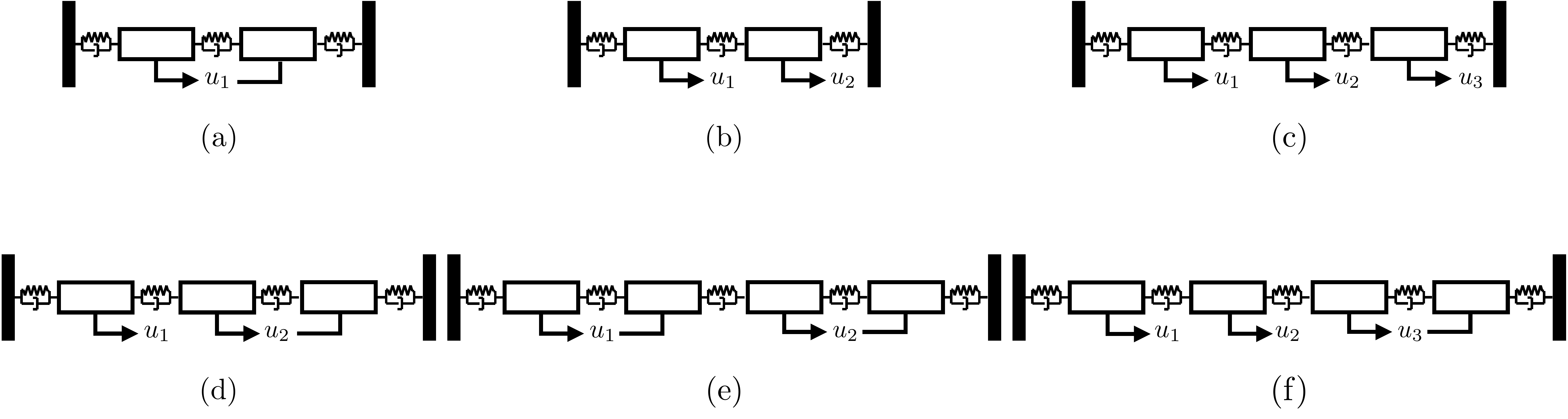}
	\else
		\includegraphics[trim=0cm 11cm 0cm 9cm,clip,width=\columnwidth]{masses.pdf}
	\fi
	\caption{Systems of 2, 3 and 4 oscillating masses each with one degree of freedom, connected each other through pairs of spring-damper blocks, and to walls (dark blocks on the sides). Control inputs $u_1$, $u_2$ and $u_3$ are either acting on each single mass at a time, or produce a joint action on multiple masses, according to the direction indicated by each arrow.}
	\label{fig:masses}
\end{figure*}

We now illustrate how to apply our proposed certificates to design minimum complexity \gls{PWA-NN} controllers for stabilizing a system of oscillating masses, each with one degree of freedom as in \cite{wang2009fast,zeilinger2011real,karg2020efficient}. We consider the configurations shown in Fig.~\ref{fig:masses}, thus dealing with systems characterized by four to eight state variables, under the action of up to three control inputs. All simulations were run in Matlab using Gurobi \cite{gurobi} as an \gls{MILP} solver on a laptop with a Quad-Core Intel i5 2.4 GHz CPU and 8 Gb RAM. The \gls{ReLU} networks are all trained by adopting a Levenberg–Marquardt algorithm with mean squared normalized error as a performance function.

As in \cite{zeilinger2011real}, all masses are 1, springs constants 1 and damping constants 0.5. After discretizing the dynamics with sampling rate 0.1, the state $x_i \in \R^2$ of each mass $i \in \{1, \ldots, 4\}$ containing position and velocity is subject to element-wise constraints $|x_i| \leq \col(4, 10)$, while for the control input $u_j$ we have $|u_j| \le 1$, $j \in \{1, 2, 3\}$. We design the \gls{MPC} scheme in \eqref{eq:QP_init} by setting the prediction horizon to $T = 5$, while the weight matrices $Q$ and $R$ are taken as identity matrices. To meet Standing Assumption~\ref{standing:mp-QP}.(i) instead, we choose the terminal weight $P$ and gain $\bar{K}_\textrm{MPC}$ from the unconstrained \gls{LQR} optimal solution. The maximal output admissible set $\mc{X}_\infty$ is computed through the procedure in \cite{gilbert1991linear}, $c_2$ in \eqref{eq:ISS_Lyap_2} as described in \cite[Appendix]{scokaert1997discrete}, $\varsigma$ and $\lambda$ characterizing the upper bound in \eqref{eq:upper_bound_lipschitz} follow from the Gelfand formula \cite[Cor.~5.6.14]{horn2012matrix}, while quantities $c$, $\Omega_c$ and $\mc{L}_2(V_T, \Omega_c)$ have been estimated numerically. For all the considered configurations, the upper bound $\zeta$ is obtained from \eqref{eq:error_bound} with $\rho = 0.9999$.

\begin{table*}[!t]
\caption{Numerical results for the systems of coupled oscillators described in Fig.~\ref{fig:masses}}\label{tab:masses_results}
\centering
	\begin{tabular}{ccccccccccccc}
		\toprule
		\multirow{2}[2]{*}{Case} & \multicolumn{2}{c}{\gls{eMPC}} & & \multicolumn{9}{c}{\gls{PWA-NN} controller} \\
		\cmidrule{2-3} \cmidrule{5-13} & \multicolumn{1}{c}{\footnotesize{\# of $\mc{R}_{\mc{A}}$}} & \multicolumn{1}{c}{\footnotesize{CPU time [s]}}  &  & \multicolumn{1}{c}{\footnotesize{$N$}} & \multicolumn{1}{c}{\footnotesize{$L$}} & \multicolumn{1}{c}{\footnotesize{$\bar{e}_\infty$}} & \multicolumn{1}{c}{\footnotesize{$\zeta$}} & \multicolumn{1}{c}{\footnotesize{CPU time [s]}} & \multicolumn{1}{c}{\footnotesize{$\mc{L}_\infty(e, \Omega_b)$}} & \multicolumn{1}{c}{\footnotesize{$\vartheta$}} & \multicolumn{1}{c}{\footnotesize{CPU time [s]}} & \multicolumn{1}{c}{\footnotesize{$\textrm{avg}(\mathscr{D})$}}\\
		\midrule
		(a) & 2235 & 121.9  && 41 & 2 & 0.002 & 1.52 & 13.25 & 0.17 & 0.19& 8.35 & $\sim$10\textsuperscript{-6}\\
		\midrule
		(b) & 9161 & 809.8 && 42 & 2 & 0.007 & 1.78 & 224.6 & 0.02 & 0.21 & 0.8 & $\sim$10\textsuperscript{-7}\\
		\midrule
		(c) & * & > 3600  && 73 & 2 & 0.283 & 1 & 1179 & 1.49 & 0.14 & 278.1 & $\sim$10\textsuperscript{-5}\\
		\midrule
		(d) & * & > 3600 && 62 & 2 & 0.009 & 0.72 & 825.4 & 0.04 & 0.12 & 42.81 & $\sim$10\textsuperscript{-7}\\
		\midrule
		(e) & * & > 3600 && 62 & 2 & 0.007 & 0.71 & 1083 & 0.001 & 0.12 & 36.3 & $\sim$10\textsuperscript{-4}\\
		\midrule
		(f) & * & > 3600  && 63 & 3 & 0.031 & 0.47 & 2126 & 0.8 & 0.1 & 1371 & $\sim$10\textsuperscript{-5}\\
		\bottomrule
	\end{tabular}
\end{table*}

The numerical results obtained are summarized in Table~\ref{tab:masses_results} where we have considered the case $\alpha = \infty$, trained each \gls{ReLU} network with $20 \times 10^3$ samples, while the $N-m$ neurons are equally distributed across the $L$ hidden layers. As expected, the control approach based on the optimal explicit solution of \eqref{eq:QP_init} provided by the \gls{MPC} Toolbox \cite{bemporad2021model} is not viable as the dimension of the considered problem grows, while the \gls{PWA-NN} controller based on a \gls{ReLU} approximation of $u_\textrm{MPC}(\cdot)$ still makes possible the stabilization of the systems of coupled oscillators with reasonable offline computation (even far less in those cases admitting a direct comparison, i.e., (a) and (b)).

In fact, the columns referring to \gls{eMPC} show that we can obtain an optimal explicit solution in less than 3600 [s] only for the configurations (a) and (b), while in the remaining cases the simulation was aborted after one hour. In these two scenarios, we also observe that the number of linear regions more than quadruples just by introducing an additional control input acting on the second mass. 
On the other hand, certifying the (exponential) stability guarantees of the minimum complexity \gls{ReLU}-based approximation is still feasible in all configurations considered. As highlighted in Remark~\ref{remark:omega_b}, since the worst-case error $\bar{e}_\infty$ on $\Omega_c$ can be made very small by checking the condition on the Lipschitz constant in \eqref{eq:upper_bound_lipschitz} over $\Omega_b$ rather than the whole of $\mc{X}_\infty$ is preferable, since it allows us to recover exponential stability in all configurations but (c) and (f), while keeping the computational time relatively low (compared to the one for computing $\bar{e}_\infty$). From our numerical experience, indeed, we noticed that satisfying the condition \eqref{eq:upper_bound_lipschitz} with $\mc{L}_\infty(e,\mc{X}_\infty)$ is not trivial, also possibly requiring a much larger amount of time since $\mc{X}_\infty \supset \Omega_b$: this is confirmed by the numerical results obtained for cases (c) and (f).

In the last column of Table~\ref{tab:masses_results} we evaluate the deterioration of control performance of the \gls{PWA-NN} controller relative to the \gls{MPC} policy according to the metric proposed in \cite{zeilinger2011real}. Specifically, we focus on the difference between the cost of the closed-loop trajectory using the optimal control input $u_\textrm{MPC}(\cdot)$, indicated by $x_{\textrm{MPC}}(k)$, $k \in \N$, and the one using the \gls{ReLU}-based controller $u_\textrm{NN}(\cdot)$, i.e., $x_{\textrm{NN}}(k)$, resulting in
\ifTwoColumn
$$
\begin{aligned}
	&\mathscr{D}(x(0)) \eqdef\\
	&\ \frac{\sum_{k = 0}^\infty  \ell(x_{\textrm{NN}}(k), u_\textrm{NN}(x_{\textrm{NN}}(k))) - \ell(x_{\textrm{MPC}}(k), u_\textrm{MPC}(x_{\textrm{MPC}}(k)))}{\sum_{k = 0}^\infty \ell(x_{\textrm{MPC}}(k), u_\textrm{MPC}(x_{\textrm{MPC}}(k)))},
\end{aligned}
$$
\else
$$
	\mathscr{D}(x(0)) \eqdef \frac{\sum_{k = 0}^\infty  \ell(x_{\textrm{NN}}(k), u_\textrm{NN}(x_{\textrm{NN}}(k))) - \ell(x_{\textrm{MPC}}(k), u_\textrm{MPC}(x_{\textrm{MPC}}(k)))}{\sum_{k = 0}^\infty \ell(x_{\textrm{MPC}}(k), u_\textrm{MPC}(x_{\textrm{MPC}}(k)))},
$$
\fi
where $\ell(x, u) \coloneqq \tfrac{1}{2} (\|x\|^2_Q + \|v\|^2_R)$ is the stage cost in \eqref{eq:QP_init}, and $x_{\textrm{MPC}}(0) = x_{\textrm{NN}}(0) = x(0)$.
By considering the average closed-loop performance deterioration taken over $10^3$ initial conditions uniformly sampled in $\Omega_c$ with $\textrm{max} \{\|x_{\textrm{MPC}}(k)\|_2, \|x_{\textrm{NN}}(k)\|_2\} \le 10^{-3}$ as stopping criterion, we observe no substantial performance degradation with nearly coincident close-loop trajectories. Note that these considerations hold even for those cases where the approximation quality of the \gls{PWA-NN} controller is not good enough to meet the condition involving $\mc{L}_\infty(e,\Omega_b)$, i.e., configurations (c) and (f). This was expected as the conditions of \S \ref{sec:stability} are sufficient only, hence conservative.

A common drawback of \gls{MI} optimization is poor scalability with increasing problem size.   However, we observe in
Table~\ref{tab:masses_results} that, fixing the number of inputs, the computation time is only weakly dependent on the state dimension $n$ -- compare for instance scenarios (d) and (e), and eventually also (b) (though this latter considers a different number of neurons).   On the other hand, the numerical results indicate that computation time is very sensitive to the number of inputs $m$, as is evident by contrasting configurations (a)--(b), and (e)--(f) separately (i.e., fixing the state dimension).   For larger problems the non-negligible offline computational efforts exhibited by our certification  method with a limited number of neurons and layers provide a further motivation to design minimum complexity \gls{ReLU}-based controllers that can be implemented on dedicated hardware up to tens of MHz \cite{duarte2018fast,zhang2019real,schindler2020real}.

\section{Conclusion and Outlook}

The implementation of controllers that closely approximate the action of \gls{MPC} policies with minimal online computational load is a critical consideration for fast embedded systems.
We have shown that the design of  \gls{ReLU}-based approximations with provable stability guarantees require one to construct and solve two \glspl{MILP} offline, whose associated optimal values characterize key quantities of the approximation error. We have provided a systematic way to encode the maximal gain of a given \gls{MPC} law through binary and continuous variables subject to \gls{MI} constraints. This optimization-based result is compatible with existing results from the machine learning literature on computing the Lipschitz constant of a trained \gls{ReLU} network.  Taken together they provide sufficient conditions to assess the reliability, in terms of stability of the closed-loop system, of a given \gls{ReLU}-based approximation of an \gls{MPC} scheme.

We believe our work can be extended in several ways. An interesting direction to explore is the connection between our results and those established in \cite{karg2020efficient}, which could allow one to apply firm limits to the complexity of the underlying \gls{ReLU}-based approximation with provable stability guarantees. Since the analysis carried out in \S \ref{sec:stability} provides only sufficient conditions for good controller performance, it would be interesting to investigate whether there exist less conservative conditions involving different, but still computable, properties of the approximation error. Finally, since our results involve the optimal values of two \glspl{MILP}, it may be of interest to explore probabilistic counterparts of our deterministic statements, e.g., via PAC learning and randomized approaches.


\appendix

\subsection{Proof of \S \ref{sec:stability}}\label{app:proofs}

\textit{Proof of Lemma~\ref{lemma:NN_stability}}: Define the process $w(x) \eqdef B e(x)$ in \eqref{eq:perturbed_dyn}, an additive disturbance taking values in $\mc{W} \eqdef \{w \in \R^n \mid \|w\|_2 \leq \bar{w}\}$ for all $k \in \N$, with $\bar{w} \eqdef s \bar{e}_{\alpha}$, where $s > 0$ is a scaling factor accounting for $\|B\|_2$ and the choice of $\alpha \in \N_\infty$.


In view of Standing Assumption~\ref{standing:mp-QP}.(i), the nominal closed-loop system $x^+ = Ax + Bu_{\textrm{MPC}}(x)$ converges exponentially to the origin with region of attraction $\mc{X}$ \cite[\S 2.5.3.1]{rawlings2017model}.  In this case, the function $V_T(\cdot)$ serves as a Lyapunov function for which there exist constants $c_2 > c_1 > 0$ satisfying
\begin{equation}\label{eq:ISS_Lyap_1}
	\begin{aligned}
		&c_1 \|x\|_2^2 \leq V_T(x) \leq  c_2 \|x\|_2^2,\\
		&V_T(Ax + Bu_{\textrm{MPC}}(x)) \leq  V_T(x) - c_1 \|x\|_2^2,
	\end{aligned}
\end{equation}
for all $x \in \mc{X}$ \cite{mayne2000constrained}. There also exists a contraction factor $\gamma \eqdef (1 - c_1/c_2) \in (0,1)$ such that $V_T(Ax + Bu_{\textrm{MPC}}(x)) \leq  \gamma V_T(x)$ for all $x \in \mc{X}$.
Now, let $\Omega_c$ denote the largest sublevel set of $V_T(\cdot)$ contained in $\mc{X}$, with $c \eqdef \textrm{max} \, \{a \geq 0 \mid \Omega_a \subseteq \mc{X} \}$.
It follows from \cite[Prop.~7.13]{rawlings2017model} that $V_T(\cdot)$ is Lipschitz continuous in $\Omega_c$ with constant $\mc{L}_2(V_T, \Omega_c)$.
Due to the presence of the disturbance, however, the value function is not guaranteed to decrease along the trajectories of \eqref{eq:perturbed_dyn}, since we have
\ifTwoColumn
\begin{equation}\label{eq:ISS_Lyap_2}
	\begin{aligned}
		&V_T(Ax + Bu_{\textrm{MPC}}(x) + w) - V_T(x)\\
		&\hspace{.3cm} \leq V_T(Ax + Bu_{\textrm{MPC}}(x)) - V_T(x) + \mc{L}_2(V_T, \Omega_c) \|w\|_2,
	\end{aligned}
\end{equation}
\else
\begin{equation}\label{eq:ISS_Lyap_2}
		V_T(Ax + Bu_{\textrm{MPC}}(x) + w) - V_T(x) \leq V_T(Ax + Bu_{\textrm{MPC}}(x)) - V_T(x) + \mc{L}_2(V_T, \Omega_c) \|w\|_2,
\end{equation}
\fi
for all $w \in \mc{W}$, and hence $V_T(Ax + Bu_{\textrm{MPC}}(x) + w) \leq  \gamma V_T(x) + \mc{L}_2(V_T, \Omega_c) \|w\|_2$. From \eqref{eq:ISS_Lyap_1} and \eqref{eq:ISS_Lyap_2}, note that $V_T(\cdot)$ is a candidate \gls{ISS}-Lyapunov function \cite[Def.~3.2]{jiang2001input} in $\Omega_c$ for the perturbed system in \eqref{eq:perturbed_dyn}.
To prove that the system is \gls{ISS} in $\Omega_c$ \cite[Def.~3.1]{jiang2001input}, it remains only to show that $\Omega_c$ is robust positively invariant \cite[Def.~4.3]{blanchini2015set}.
Since $\mc{W}$ is bounded, we can first focus on a lower sublevel set of $\Omega_c$, say $\Omega_b$ with $b < c$, and show that it is robust positively invariant for the dynamics in \eqref{eq:perturbed_dyn}. To prove this, let $b$ be chosen such that for a given $x \in \Omega_b$, $V_T(Ax + Bu_{\textrm{MPC}}(x) + w) < V_T(x) \leq b$, for all $w \in \mc{W}$. We then have
$
V_T(Ax + Bu_{\textrm{MPC}}(x) + w) \leq  \gamma V_T(x) + \mc{L}_2(V_T, \Omega_c) \|w\|_2 \leq \gamma b + \bar{w} \mc{L}_2(V_T, \Omega_c),
$
which is strictly smaller than $b$ if $b > \bar{w} \mc{L}_2(V_T, \Omega_c)/(1-\gamma)$. Thus, $(Ax + Bu_{\textrm{MPC}}(x) + w) \in \Omega_b$, for all $x \in \Omega_b$ and $w \in \mc{W}$.


We now suppose that the disturbance is bounded by $\bar{w} \leq (\rho - \gamma) b / \mc{L}_2(V_T, \Omega_c)$ for some $\rho > \gamma$. Such a restriction amounts to a condition that $\Omega_c$ is a contractive set, and hence robust positively invariant, w.r.t. the dynamics \eqref{eq:perturbed_dyn}. In fact, it first yields $b \geq \bar{w} \mc{L}_2(V_T, \Omega_c) / (\rho - \gamma)$. To satisfy the chain of inequalities $b \geq \bar{w} \mc{L}_2(V_T, \Omega_c) / (\rho - \gamma) > \bar{w} \mc{L}_2(V_T, \Omega_c)/(1-\gamma)$, thus guaranteeing the robust invariance of $\Omega_b$, we therefore require $\rho \in (\gamma, 1)$. Moreover, since $\bar{w} \leq (\rho - \gamma) b / \mc{L}_2(V_T, \Omega_c)$, by noting that, for all $x \in \Omega_c \setminus \Omega_b$, $V_T(x) \geq b$, we have
$
V_T(Ax + Bu_{\textrm{MPC}}(x) + w) \leq  \gamma V_T(x) + \mc{L}_2(V_T, \Omega_c) \|w\|_2 \leq \gamma V_T(x) + \bar{w} \mc{L}_2(V_T, \Omega_c)
\leq \gamma V_T(x) + (\rho - \gamma) b
\leq \gamma V_T(x) + (\rho - \gamma) V_T(x)
\leq \rho V_T(x),
$
showing that $\Omega_c$ is contractive for \eqref{eq:perturbed_dyn}. Therefore, for any $x(0) \in \Omega_c$ the perturbed system enters, in finite time, the robust invariant set $\Omega_b$.
However, the magnitude of the disturbance $\bar{w}$ can not be arbitrary since we  assumed $b < c$, i.e., $\bar{w} < (\rho - \gamma) c/\mc{L}_2(V_T, \Omega_c)$.  This in turn implies
\begin{equation}\label{eq:error_bound}
	\bar{e}_{\alpha} < \frac{(\rho - \gamma) c}{s \mc{L}_2(V_T, \Omega_c)} \reqdef \zeta.
\end{equation}
The proof is completed from \cite[Lemma~B.38]{rawlings2017model} by noting that $\Omega_c$ is robust invariant for \eqref{eq:perturbed_dyn} with \gls{ISS}-Lyapunov function $V_T(\cdot)$, thus ensuring that the perturbed dynamics \eqref{eq:perturbed_dyn} is \gls{ISS} in $\Omega_c$.
\hfill\qedsymbol

\smallskip

\textit{Proof of Lemma~\ref{lemma:NN_stability_local}}: For any $x \in \mc{X}_\infty$, we know that $u_\textrm{MPC}(x) = \bar{K}_\textrm{MPC} x$, and in view of Standing Assumption~\ref{standing:mp-QP}.(i), the closed-loop matrix $\bar{A} \eqdef A + B \bar{K}_\textrm{MPC}$ is Schur stable. Thus, there exist constants $\varsigma > 0$ and $\lambda \in (0,1)$ such that $\|\bar{A}^k\|_2 \leq \varsigma \lambda^k$, $k \in \N_0$. With this consideration, for all $k \in \N$ and $x(0) \in \mc{X}_\infty$, the evolution of the perturbed system \eqref{eq:perturbed_dyn} satisfies the following relations:
$$
\begin{aligned}
	\|x(k)\|_2 &= \| \bar{A}^k x(0) + \textstyle\sum_{j  = 0}^{k-1} \bar{A}^{k - j - 1} B e(j) \|_2\\
	&\leq \varsigma \lambda^{k} \|x(0)\|_2 + \textstyle\sum_{j  = 0}^{k-1} \|\bar{A}^{k - j - 1}\|_2 \|B e(j)\|_2\\
	&\leq \varsigma \lambda^{k} \|x(0)\|_2 + s' \varsigma \mc{L}_\alpha(e, \mc{X}_\infty)  \textstyle\sum_{j  = 0}^{k-1} \lambda^{k - j - 1} \|x(j)\|_2,
\end{aligned}
$$
where we have exploited the Lipschitz continuity of $e(\cdot)$ on $\mc{X}_\infty$, as well as the fact that $e(0) = 0$ (otherwise the origin is not an equilibrium for \eqref{eq:LTI_sys}). Here, $s' > 0$ is a scaling factor that accounts for $\|B\|_2$ and the choice of $\alpha \in \N_\infty$.
Then, by introducing $\upsilon(k) \eqdef \lambda^{-k} x(k)$, the inequality above becomes
$
\|\upsilon(k)\|_2 \leq \varsigma \|\upsilon(0)\|_2 + s' \varsigma \mc{L}_\alpha(e, \mc{X}_\infty)  \textstyle\sum_{j  = 0}^{k-1} \lambda^{-1} \|\upsilon(j)\|_2
$,
and by leveraging the Gr\"onwall inequality \cite[Cor.~4.1.2]{agarwal2000difference}, we obtain
$
\|\upsilon(k)\|_2 \leq \varsigma \|\upsilon(0)\|_2 \textstyle\prod_{j  = 0}^{k-1} (1 + s' \varsigma  \mc{L}_\alpha(e, \mc{X}_\infty)  \textstyle\sum_{j  = 0}^{k-1} \lambda^{-1}) \leq \varsigma \|\upsilon(0)\|_2 \, \mathrm{exp}(k s' \varsigma \mc{L}_\alpha(e, \mc{X}_\infty) \lambda^{-1})
$,
or, equivalently,
$$
	\begin{aligned}
		\|x(k)\|_2 &\leq \varsigma \lambda^k \|x(0)\|_2 \, \mathrm{exp}(k s' \varsigma \mc{L}_\alpha(e, \mc{X}_\infty) \lambda^{-1})\\
		&= \varsigma \|x(0)\|_2 \, \mathrm{exp}(k s' \varsigma \mc{L}_\alpha(e, \mc{X}_\infty) \lambda^{-1} + k \ln \lambda)\\
	\end{aligned}
$$
Then, if $s \varsigma  \mc{L}_\alpha(e, \mc{X}_\infty) \lambda^{-1} + \ln \lambda < 0$, which leads to
\begin{equation}\label{eq:upper_bound_lipschitz}
	\mc{L}_\alpha(e, \mc{X}_\infty) < -\frac{\lambda \ln \lambda}{s' \varsigma} \eqqcolon \vartheta,
\end{equation}
the system \eqref{eq:perturbed_dyn} is exponentially stable in $\mc{X}_\infty$, and so is the \gls{LTI} system in \eqref{eq:LTI_sys} with \gls{PWA-NN} controller $u = u_{\normaltext{\textrm{NN}}}(x)$.
\hfill\qedsymbol

\smallskip

\textit{Proof of Theorem~\ref{th:exp_conv}}: 	We consider the case in which $x \in \Omega_c  \setminus \Omega_b$, since if $x \in \Omega_b$ and the inclusion $\Omega_b \subseteq \mc{X}_\infty$ holds, then the conclusion follows immediately from Lemma~\ref{lemma:NN_stability_local}.
Thus, by focusing on the perturbed dynamics in \eqref{eq:perturbed_dyn}, if $\bar{e}_\alpha$ satisfies \eqref{eq:error_bound}, for any $x \in \Omega_c  \setminus \Omega_b$, $V_T(Ax + Bu_{\textrm{MPC}}(x) + w) \leq \rho V_T(x)$ for all $w \in \mc{W}$ and $\rho \in (\gamma,1)$.
Thus, with $x(k+1) = Ax(k) + Bu_{\textrm{MPC}}(x(k)) + w(k)$ leveraging the relations in \eqref{eq:ISS_Lyap_1} yields:
	$
		c_1 \|x(k+1)\|_2^2 \le V_T(Ax(k)+B u_{\text{MPC}}(x(k))+w) \le \rho V_T(x(k)) \le \ldots \le \rho^{k+1} V_T(x(0)) \le c_2 \rho^{k+1} \|x(0)\|^2_2,
	$
for any possible realization of the sequence of $\{w(0), w(1), \ldots, w(k)\}$ leading to $x(k+1)$ by starting from some $x(0) \in \Omega_c \setminus \Omega_b$. This therefore implies that $\|x(k)\|_2 \le \sqrt{c_2/c_1} (\sqrt{\rho})^k \|x(0)\|_2 = \sqrt{c_2/c_1} \|x(0)\|_2 \ \mathrm{exp}(k \ln \sqrt{\rho})$
%
for all $k \in \N$ such that $k \leq k_1$, where $k_1 \in \N$ denotes the time instant in which \eqref{eq:perturbed_dyn} enters $\Omega_b$, which is guaranteed to exist finite in view of Lemma~\ref{lemma:NN_stability}. Then, if we can choose $b \geq 0$ such that $\Omega_b \subseteq \mc{X}_\infty$,
there also exists some $k_2 \leq k_1$, $k_2 \in \N$, in which the perturbed dynamics in \eqref{eq:perturbed_dyn} enters $\mc{X}_\infty$, and hence exponentially converges to the origin.
We thus conclude that the origin is an exponentially stable equilibrium for \eqref{eq:LTI_sys} with \gls{PWA-NN} controller $u = u_{\normaltext{\textrm{NN}}}(x)$.
\hfill\qedsymbol

\subsection{Further numerical performance results}\label{app:maximal_gain_MPC}
We compare the computation of $\mc{L}_{\alpha}(u_\normaltext{\textrm{MPC}},\mc{X})$, $\alpha \in \{1, \infty\}$ using the proposed optimization-based approach described in Theorem~\ref{th:norm_comp}, relative to one based on direct computation via the \gls{MPC} Toolbox \cite{bemporad2021model}.  We test against various numerical examples available in the literature, with all models summarized in Table~\ref{tab:examples}.  Numerical results are shown in Table~\ref{tab:numerical_values}.

In all examples with the exception of Example~3, the computational time required to solve the \normaltext{\gls{MILP}} in Theorem~\ref{th:norm_comp} is lower than that required by the \normaltext{\gls{MPC}} Toolbox to generate a solution. When computing the maximal gain, which requires comparison of linear gains across all regions of the partition of $u_\normaltext{\textrm{MPC}}(\cdot)$, \normaltext{\gls{eMPC}} requires significant memory to store all the involved matrices/vectors for every region (up to 729 in Ex.~6).
%
%
%
%
Finally, we note that almost all the numerical values reported in the columns $\mc{L}_{\alpha}(u_\normaltext{\textrm{MPC}},\mc{X})$, $\alpha \in \{1, \infty\}$ agree  between the two methods. In those cases showing a discrepancy, e.g., Ex.~1, 2, and 6, the ``\normaltext{MIPGap}'' columns suggest that, in computing the critical regions and associated  controllers, the \normaltext{\gls{MPC}} Toolbox makes some internal approximation, as the primal-dual gap of the solutions to the \normaltext{\gls{MILP}} is exactly zero, and hence those solutions are necessarily optimal.

\begin{table*}[!t]
	\caption{Examples from the literature}\label{tab:examples}
	\centering
		\begin{tabular}{ccccccccc}
			\toprule
			Ex. & Reference & $A$ & $B$ & $\mc{X}$ & $\mc{U}$ & $Q$ & $R$ & $T$ \\
			\midrule
			1& \footnotesize{\cite[Ex.~2.26]{schulze2014numerical}} & $\begin{bmatrix}
				\phantom{-}1.1 & 0.2 \\
				-0.2 & 1.1
			\end{bmatrix}$ & $\begin{bmatrix}
				0.5 & 0\\
				0 & 0.4
			\end{bmatrix}$ & $\begin{aligned}
				&|x_1| \leq 5\\
				&|x_2| \leq 5
			\end{aligned}$ & $\begin{aligned}
				&|u_1| \leq 1\\
				&|u_2| \leq 1
			\end{aligned}$ & $I$ & $0.1  I$ & $3$ \\
			\midrule
			2& \footnotesize{\cite[Rem.~4.8]{gutman1987algorithm}}  & $\begin{bmatrix}
				1 & 0.5 & 0.125\\
				0 & 1 & 0.5\\
				0 & 0 & 1
			\end{bmatrix}$ & $\begin{bmatrix}
				0.02\\
				0.125\\
				0.5
			\end{bmatrix}$ & $\begin{aligned}
				&|x_1| \leq 20\\
				&|x_2| \leq 3\\
				&|x_3| \leq 1
			\end{aligned}$ & $|u| \leq 0.5$ & $I$ & $1$ & $3$ \\
			\midrule
			3& \footnotesize{\cite[Ex.~3]{darup2016some}} & $\begin{bmatrix}
				0 & 1 \\
				1 & 0
			\end{bmatrix}$ & $\begin{bmatrix}
				2\\
				4
			\end{bmatrix}$ & $\begin{aligned}
				&|x_1| \leq 5\\
				&|x_2| \leq 5
			\end{aligned}$ & $|u| \leq 1$ & $I$ & $4.5$ & $8$ \\
			\midrule
			4& \footnotesize{\cite[Eqs.~(2.8)--(2.9)]{gutman1987algorithm}} & $\begin{bmatrix}
				1 & 1\\
				0 & 1
			\end{bmatrix}$ & $\begin{bmatrix}
				0.5\\
				1
			\end{bmatrix}$ & $\begin{aligned}
				&|x_1| \leq 25\\
				&|x_2| \leq 5
			\end{aligned}$ & $|u| \leq 1$ & $I$ & $0.1$ & $10$\\
			\midrule
			5& \footnotesize{\cite[Ex.~6.1]{bemporad2003suboptimal}} & $\begin{bmatrix}
				0.7969 & -0.2247 \\
				0.1798 & \phantom{-}0.9767
			\end{bmatrix}$ & $\begin{bmatrix}
				0.1271\\
				0.0132
			\end{bmatrix}$ & $\begin{aligned}
				&|x_1| \leq 4\\
				&|x_2| \leq 4
			\end{aligned}$ & $|u| \leq 1$ & $I$ & $0.1$ & $8$\\
			\midrule
			6 & \footnotesize{\cite[\S VI]{jones2009approximate}} & $\begin{bmatrix}
				1 & 1 \\
				0 & 1
			\end{bmatrix}$ & $\begin{bmatrix}
				0.42 & 0.9\\
				0.38 & 0.67
			\end{bmatrix}$ & $\begin{aligned}
				&|x_1| \leq 40\\
				&|x_2| \leq 10
			\end{aligned}$ & $\begin{aligned}
				&|u_1| \leq 0.1\\
				&|u_2| \leq 0.1
			\end{aligned}$ & $I$ & $30 I$ & $10$\\
			\midrule
			7 & \footnotesize{\cite[\S IV]{markolf2021polytopic}} & $\begin{bmatrix}
				1.5 & \phantom{-}0 \\
				1 & -1.5
			\end{bmatrix}$ & $\begin{bmatrix}
				1 & 0\\
				0 & 1
			\end{bmatrix}$ & $\begin{aligned}
				&|x_1| \leq 6\\
				&|x_2| \leq 6
			\end{aligned}$ & $\begin{aligned}
				&|u_1| \leq 5\\
				&|u_2| \leq 5
			\end{aligned}$ & $I$ & $I$ & $12$\\
			\bottomrule
		\end{tabular}
	\end{table*}

	\begin{table*}[!t]
		\caption{Numerical results for the computation of $\mc{L}_{\alpha}(u_\textrm{MPC},\mc{X})$, $\alpha \in \{1, \infty\}$ -- Examples in Table~\ref{tab:examples}}\label{tab:numerical_values}
		\centering
			\begin{tabular}{cccccccccccc}
				\toprule
				\multirow{2}[2]{*}{Ex.} & \multicolumn{4}{c}{\gls{MPC} Toolbox \cite{bemporad2021model}} & & \multicolumn{6}{c}{\gls{MILP} Th.~\ref{th:norm_comp} + Prop.~\ref{prop:MILP_matrix_norm}} \\
				\cmidrule{2-5} \cmidrule{7-12} & \multicolumn{1}{c}{\footnotesize{\# of $\mc{R}_{\mc{A}}$}} & \multicolumn{1}{c}{\footnotesize{CPU time [s]}} & \multicolumn{1}{c}{\footnotesize{$\mc{L}_{1}$}} & \multicolumn{1}{c}{\footnotesize{$\mc{L}_{\infty}$}} &  & \multicolumn{1}{c}{\footnotesize{$\mc{L}_{1}$}} & \multicolumn{1}{c}{\footnotesize{CPU time [s]}} & \multicolumn{1}{c}{\footnotesize{MIPGap}} & \multicolumn{1}{c}{\footnotesize{$\mc{L}_{\infty}$}} & \multicolumn{1}{c}{\footnotesize{CPU time [s]}} & \multicolumn{1}{c}{\footnotesize{MIPGap}}\\
				\midrule
				1 & 99 & 1.24 & {16.6} & {11.67} & & {16.1} & 0.51& 0& {11.7} & 0.45& 0\\
				\midrule
				2 & 111 & 1.56 & {11.98} &  {7.99} && {12.00} & 0.56 & 0 & {8.00} & 0.69 & 0\\
				\midrule
				3 & 27 & 1.18 & {0.5} & {0.49} & &  {0.5} & 2.48 & 0 & {0.5} & 3.89 & 0\\
				\midrule
				4 & 317 & 23 & {1.88} & {1.27} & & {1.88} & 10.8 & 0 & {1.27} & 16.1 & 0\\
				\midrule
				5 & 105 & 1.34 & {3.1} & {2.39} & & {3.1} & 0.99 & 0 & {2.39} & 0.98 & 0\\
				\midrule
				6 & 729 & 117.8 & {1.76} & {1.52} & & {1.77} & 8.4 & 0 & {1.53} & 7.9 & 0\\
				\midrule
				7 & 15 & 5.46 & {1.66} & {1.66} & & {1.66} & 0.14 & 0 & {1.66} & 0.14 & 0\\
				\bottomrule
			\end{tabular}
		\end{table*}

\bibliographystyle{IEEEtran}
\bibliography{PWA_NN_controllers.bib}

\begin{thebibliography}{10}
\providecommand{\url}[1]{#1}
\csname url@samestyle\endcsname
\providecommand{\newblock}{\relax}
\providecommand{\bibinfo}[2]{#2}
\providecommand{\BIBentrySTDinterwordspacing}{\spaceskip=0pt\relax}
\providecommand{\BIBentryALTinterwordstretchfactor}{4}
\providecommand{\BIBentryALTinterwordspacing}{\spaceskip=\fontdimen2\font plus
\BIBentryALTinterwordstretchfactor\fontdimen3\font minus
  \fontdimen4\font\relax}
\providecommand{\BIBforeignlanguage}[2]{{%
\expandafter\ifx\csname l@#1\endcsname\relax
\typeout{** WARNING: IEEEtran.bst: No hyphenation pattern has been}%
\typeout{** loaded for the language `#1'. Using the pattern for}%
\typeout{** the default language instead.}%
\else
\language=\csname l@#1\endcsname
\fi
#2}}
\providecommand{\BIBdecl}{\relax}
\BIBdecl

\bibitem{rawlings2017model}
J.~B. Rawlings, D.~Q. Mayne, and M.~Diehl, \emph{Model Predictive Control:
  {T}heory, Computation, and Design}.\hskip 1em plus 0.5em minus 0.4em\relax
  Nob Hill Publishing, 2017.

\bibitem{borrelli2017predictive}
F.~Borrelli, A.~Bemporad, and M.~Morari, \emph{Predictive control for linear
  and hybrid systems}.\hskip 1em plus 0.5em minus 0.4em\relax Cambridge
  University Press, 2017.

\bibitem{johansen2000explicit}
T.~A. Johansen, I.~Petersen, and O.~Slupphaug, ``On explicit suboptimal {LQR}
  with state and input constraints,'' in \emph{Proceedings of the 39th IEEE
  Conference on Decision and Controll}, vol.~1.\hskip 1em plus 0.5em minus
  0.4em\relax IEEE, 2000, pp. 662--667.

\bibitem{bemporad2002explicit}
A.~Bemporad, M.~Morari, V.~Dua, and E.~N. Pistikopoulos, ``The explicit linear
  quadratic regulator for constrained systems,'' \emph{Automatica}, vol.~38,
  no.~1, pp. 3--20, 2002.

\bibitem{qin2003survey}
S.~J. Qin and T.~A. Badgwell, ``A survey of industrial model predictive control
  technology,'' \emph{Control Engineering Practice}, vol.~11, no.~7, pp.
  733--764, 2003.

\bibitem{erez2013integrated}
T.~Erez, K.~Lowrey, Y.~Tassa, V.~Kumar, S.~Kolev, and E.~Todorov, ``An
  integrated system for real-time model predictive control of humanoid
  robots,'' in \emph{2013 13th IEEE-RAS International Conference on Humanoid
  Robots (Humanoids)}.\hskip 1em plus 0.5em minus 0.4em\relax IEEE, 2013, pp.
  292--299.

\bibitem{nubert2020safe}
J.~Nubert, J.~K{\"o}hler, V.~Berenz, F.~Allg{\"o}wer, and S.~Trimpe, ``Safe and
  fast tracking on a robot manipulator: {R}obust {MPC} and neural network
  control,'' \emph{IEEE Robotics and Automation Letters}, vol.~5, no.~2, pp.
  3050--3057, 2020.

\bibitem{zhang2016learning}
T.~Zhang, G.~Kahn, S.~Levine, and P.~Abbeel, ``Learning deep control policies
  for autonomous aerial vehicles with {MPC}-guided policy search,'' in
  \emph{2016 IEEE International Conference on Robotics and Automation
  (ICRA)}.\hskip 1em plus 0.5em minus 0.4em\relax IEEE, 2016, pp. 528--535.

\bibitem{varshney2019deepcontrol}
P.~Varshney, G.~Nagar, and I.~Saha, ``{DeepControl}: {E}nergy-efficient control
  of a quadrotor using a deep neural network,'' in \emph{2019 IEEE/RSJ
  International Conference on Intelligent Robots and Systems (IROS)}.\hskip 1em
  plus 0.5em minus 0.4em\relax IEEE, 2019, pp. 43--50.

\bibitem{zhu2020safe}
M.~Zhu, Y.~Wang, Z.~Pu, J.~Hu, X.~Wang, and R.~Ke, ``Safe, efficient, and
  comfortable velocity control based on reinforcement learning for autonomous
  driving,'' \emph{Transportation Research Part C: Emerging Technologies}, vol.
  117, p. 102662, 2020.

\bibitem{richter2018bayesian}
C.~Richter, W.~Vega-Brown, and N.~Roy, ``Bayesian learning for safe high-speed
  navigation in unknown environments,'' in \emph{Robotics Research}.\hskip 1em
  plus 0.5em minus 0.4em\relax Springer, 2018, pp. 325--341.

\bibitem{maddalena2021embedded}
E.~T. Maddalena, M.~W.~F. Specq, V.~L. Wisniewski, and C.~N. Jones, ``Embedded
  {PWM} predictive control of {DC}-{DC} power converters via piecewise-affine
  neural networks,'' \emph{IEEE Open Journal of the Industrial Electronics
  Society}, vol.~2, pp. 199--206, 2021.

\bibitem{johansen2014toward}
T.~A. Johansen, ``Toward dependable embedded model predictive control,''
  \emph{IEEE Systems Journal}, vol.~11, no.~2, pp. 1208--1219, 2014.

\bibitem{alessio2009survey}
A.~Alessio and A.~Bemporad, ``A survey on explicit model predictive control,''
  in \emph{Nonlinear model predictive control}.\hskip 1em plus 0.5em minus
  0.4em\relax Springer, 2009, pp. 345--369.

\bibitem{kvasnica2011clipping}
M.~Kvasnica and M.~Fikar, ``Clipping-based complexity reduction in explicit
  {MPC},'' \emph{IEEE Transactions on Automatic Control}, vol.~57, no.~7, pp.
  1878--1883, 2011.

\bibitem{parisini1995receding}
T.~Parisini and R.~Zoppoli, ``A receding-horizon regulator for nonlinear
  systems and a neural approximation,'' \emph{Automatica}, vol.~31, no.~10, pp.
  1443--1451, 1995.

\bibitem{bemporad2003suboptimal}
A.~Bemporad and C.~Filippi, ``Suboptimal explicit receding horizon control via
  approximate multiparametric quadratic programming,'' \emph{Journal of
  Optimization Theory and Applications}, vol. 117, no.~1, pp. 9--38, 2003.

\bibitem{jones2009approximate}
C.~N. Jones and M.~Morari, ``Approximate explicit {MPC} using bilevel
  optimization,'' in \emph{2009 European control conference (ECC)}.\hskip 1em
  plus 0.5em minus 0.4em\relax IEEE, 2009, pp. 2396--2401.

\bibitem{hagan1997neural}
M.~T. Hagan, H.~B. Demuth, and M.~Beale, \emph{Neural network design}.\hskip
  1em plus 0.5em minus 0.4em\relax PWS Publishing Co., 1997.

\bibitem{Goodfellow-et-al-2016}
I.~Goodfellow, Y.~Bengio, and A.~Courville, \emph{Deep Learning}.\hskip 1em
  plus 0.5em minus 0.4em\relax MIT Press, 2016.

\bibitem{hornik1989multilayer}
K.~Hornik, M.~Stinchcombe, and H.~White, ``Multilayer feedforward networks are
  universal approximators,'' \emph{Neural Networks}, vol.~2, no.~5, pp.
  359--366, 1989.

\bibitem{duarte2018fast}
J.~Duarte, S.~Han, P.~Harris, S.~Jindariani, E.~Kreinar, B.~Kreis, J.~Ngadiuba,
  M.~Pierini, R.~Rivera, and N.~Tran, ``Fast inference of deep neural networks
  in {FPGAs} for particle physics,'' \emph{Journal of Instrumentation},
  vol.~13, no.~07, p. P07027, 2018.

\bibitem{zhang2019real}
L.~Zhang, G.~Wang, and G.~B. Giannakis, ``Real-time power system state
  estimation and forecasting via deep unrolled neural networks,'' \emph{IEEE
  Transactions on Signal Processing}, vol.~67, no.~15, pp. 4069--4077, 2019.

\bibitem{schindler2020real}
T.~Schindler and A.~Dietz, ``Real-time inference of neural networks on {FPGAs}
  for motor control applications,'' in \emph{2020 10th International Electric
  Drives Production Conference (EDPC)}.\hskip 1em plus 0.5em minus 0.4em\relax
  IEEE, 2020, pp. 1--6.

\bibitem{chen2018approximating}
S.~Chen, K.~Saulnier, N.~Atanasov, D.~D. Lee, V.~Kumar, G.~J. Pappas, and
  M.~Morari, ``Approximating explicit model predictive control using
  constrained neural networks,'' in \emph{2018 Annual American control
  conference (ACC)}.\hskip 1em plus 0.5em minus 0.4em\relax IEEE, 2018, pp.
  1520--1527.

\bibitem{hertneck2018learning}
M.~Hertneck, J.~K{\"o}hler, S.~Trimpe, and F.~Allg{\"o}wer, ``Learning an
  approximate model predictive controller with guarantees,'' \emph{IEEE Control
  Systems Letters}, vol.~2, no.~3, pp. 543--548, 2018.

\bibitem{karg2020efficient}
B.~Karg and S.~Lucia, ``Efficient representation and approximation of model
  predictive control laws via deep learning,'' \emph{IEEE Transactions on
  Cybernetics}, vol.~50, no.~9, pp. 3866--3878, 2020.

\bibitem{karg2020stability}
------, ``Stability and feasibility of neural network-based controllers via
  output range analysis,'' in \emph{2020 59th IEEE Conference on Decision and
  Control (CDC)}.\hskip 1em plus 0.5em minus 0.4em\relax IEEE, 2020, pp.
  4947--4954.

\bibitem{zhang2020near}
X.~Zhang, M.~Bujarbaruah, and F.~Borrelli, ``Near-optimal rapid {MPC} using
  neural networks: {A} primal-dual policy learning framework,'' \emph{IEEE
  Transactions on Control Systems Technology}, vol.~29, no.~5, pp. 2102--2114,
  2021.

\bibitem{maddalena2020neural}
E.~T. Maddalena, C.~G.~d.~S. Moraes, G.~Waltrich, and C.~N. Jones, ``A neural
  network architecture to learn explicit {MPC} controllers from data,''
  \emph{IFAC-PapersOnLine}, vol.~53, no.~2, pp. 11\,362--11\,367, 2020.

\bibitem{paulson2020approximate}
J.~A. Paulson and A.~Mesbah, ``Approximate closed-loop robust model predictive
  control with guaranteed stability and constraint satisfaction,'' \emph{IEEE
  Control Systems Letters}, vol.~4, no.~3, pp. 719--724, 2020.

\bibitem{zeilinger2011real}
M.~N. Zeilinger, C.~N. Jones, and M.~Morari, ``Real-time suboptimal model
  predictive control using a combination of explicit {MPC} and online
  optimization,'' \emph{IEEE Transactions on Automatic Control}, vol.~56,
  no.~7, pp. 1524--1534, 2011.

\bibitem{necoara2013linear}
I.~Necoara, V.~Nedelcu, T.~Keviczky, M.~D. Doan, and B.~De~Schutter, ``Linear
  model predictive control based on approximate optimal control inputs and
  constraint tightening,'' in \emph{52nd IEEE Conference on Decision and
  Control}.\hskip 1em plus 0.5em minus 0.4em\relax IEEE, 2013, pp. 7728--7733.

\bibitem{giselsson2013feasibility}
P.~Giselsson and A.~Rantzer, ``On feasibility, stability and performance in
  distributed model predictive control,'' \emph{IEEE Transactions on Automatic
  Control}, vol.~59, no.~4, pp. 1031--1036, 2013.

\bibitem{rubagotti2014stabilizing}
M.~Rubagotti, P.~Patrinos, and A.~Bemporad, ``Stabilizing linear model
  predictive control under inexact numerical optimization,'' \emph{IEEE
  Transactions on Automatic Control}, vol.~59, no.~6, pp. 1660--1666, 2014.

\bibitem{boyd2004convex}
S.~Boyd and L.~Vandenberghe, \emph{Convex optimization}.\hskip 1em plus 0.5em
  minus 0.4em\relax Cambridge University Press, 2004.

\bibitem{bemporad2021model}
A.~Bemporad, N.~L. Ricker, and M.~Morari, ``Model predictive control toolbox,''
  \emph{User's Guide, Version}, vol.~2, 2021.

\bibitem{montufar2014number}
G.~F. Montufar, R.~Pascanu, K.~Cho, and Y.~Bengio, ``On the number of linear
  regions of deep neural networks,'' \emph{Advances in Neural Information
  Processing Systems}, vol.~27, pp. 2924--2932, 2014.

\bibitem{siahkamari2020piecewise}
A.~Siahkamari, A.~Gangrade, B.~Kulis, and V.~Saligrama, ``Piecewise linear
  regression via a difference of convex functions,'' in \emph{International
  Conference on Machine Learning}.\hskip 1em plus 0.5em minus 0.4em\relax PMLR,
  2020, pp. 8895--8904.

\bibitem{jiang2001input}
Z.-P. Jiang and Y.~Wang, ``Input-to-state stability for discrete-time nonlinear
  systems,'' \emph{Automatica}, vol.~37, no.~6, pp. 857--869, 2001.

\bibitem{gilbert1991linear}
E.~G. Gilbert and K.~T. Tan, ``Linear systems with state and control
  constraints: {T}he theory and application of maximal output admissible
  sets,'' \emph{IEEE Transactions on Automatic control}, vol.~36, no.~9, pp.
  1008--1020, 1991.

\bibitem{rockafellar2009variational}
R.~T. Rockafellar and R.~J.-B. Wets, \emph{Variational analysis}.\hskip 1em
  plus 0.5em minus 0.4em\relax Springer Science \& Business Media, 2009, vol.
  317.

\bibitem{gorokhovik1994pwa}
V.~V. Gorokhovik, O.~I. Zorko, and G.~Birkhoff, ``Piecewise affine functions
  and polyhedral sets,'' \emph{Optimization}, vol.~31, no.~3, pp. 209--221,
  1994.

\bibitem{jordan2020exactly}
M.~Jordan and A.~G. Dimakis, ``Exactly computing the local {L}ipschitz constant
  of {ReLU} networks,'' in \emph{Advances in Neural Information Processing
  Systems}, vol.~33, 2020, pp. 7344--7353.

\bibitem{hanin2019deep}
B.~Hanin and D.~Rolnick, ``Deep {ReLU} networks have surprisingly few
  activation patterns,'' \emph{Advances in Neural Information Processing
  Systems}, vol.~32, pp. 361--370, 2019.

\bibitem{darup2017maximal}
M.~S. Darup, M.~Jost, G.~Pannocchia, and M.~M{\"o}nnigmann, ``On the maximal
  controller gain in linear {MPC},'' \emph{IFAC-PapersOnLine}, vol.~50, no.~1,
  pp. 9218--9223, 2017.

\bibitem{bemporad1999control}
A.~Bemporad and M.~Morari, ``Control of systems integrating logic, dynamics,
  and constraints,'' \emph{Automatica}, vol.~35, no.~3, pp. 407--427, 1999.

\bibitem{fischetti2018deep}
M.~Fischetti and J.~Jo, ``Deep neural networks and mixed-integer linear
  optimization,'' \emph{Constraints}, vol.~23, no.~3, pp. 296--309, 2018.

\bibitem{bengio2012practical}
Y.~Bengio, ``Practical recommendations for gradient-based training of deep
  architectures,'' in \emph{Neural Networks: Tricks of the Trade}.\hskip 1em
  plus 0.5em minus 0.4em\relax Springer, 2012, pp. 437--478.

\bibitem{markolf2021polytopic}
L.~Markolf and O.~Stursberg, ``Polytopic input constraints in learning-based
  optimal control using neural networks,'' in \emph{2021 European Control
  Conference (ECC)}.\hskip 1em plus 0.5em minus 0.4em\relax IEEE, 2021, pp.
  1018--1023.

\bibitem{ICLR16-hausknecht}
M.~Hausknecht and P.~Stone, ``Deep reinforcement learning in parameterized
  action space,'' in \emph{Proceedings of the International Conference on
  Learning Representations (ICLR)}, 2016.

\bibitem{NEURIPS2018_be53d253}
R.~R. Bunel, I.~Turkaslan, P.~Torr, P.~Kohli, and P.~K. Mudigonda, ``A unified
  view of piecewise linear neural network verification,'' in \emph{Advances in
  Neural Information Processing Systems}, vol.~31, 2018, pp. 4795--4804.

\bibitem{9301422}
M.~Fazlyab, M.~Morari, and G.~J. Pappas, ``Safety verification and robustness
  analysis of neural networks via quadratic constraints and semidefinite
  programming,'' \emph{IEEE Transactions on Automatic Control}, vol.~67, no.~1,
  pp. 1--15, 2022.

\bibitem{wang2009fast}
Y.~Wang and S.~Boyd, ``Fast model predictive control using online
  optimization,'' \emph{IEEE Transactions on Control Systems Technology},
  vol.~18, no.~2, pp. 267--278, 2009.

\bibitem{gurobi}
\BIBentryALTinterwordspacing
{Gurobi Optimization, LLC}, ``{Gurobi Optimizer Reference Manual},'' 2021.
  [Online]. Available: \url{https://www.gurobi.com}
\BIBentrySTDinterwordspacing

\bibitem{scokaert1997discrete}
P.~O.~M. Scokaert, J.~B. Rawlings, and E.~S. Meadows, ``Discrete-time stability
  with perturbations: {A}pplication to model predictive control,''
  \emph{Automatica}, vol.~33, no.~3, pp. 463--470, 1997.

\bibitem{horn2012matrix}
R.~A. Horn and C.~R. Johnson, \emph{Matrix analysis}.\hskip 1em plus 0.5em
  minus 0.4em\relax Cambridge University Press, 2012.

\bibitem{mayne2000constrained}
D.~Q. Mayne, J.~B. Rawlings, C.~V. Rao, and P.~O.~M. Scokaert, ``Constrained
  model predictive control: {S}tability and optimality,'' \emph{Automatica},
  vol.~36, no.~6, pp. 789--814, 2000.

\bibitem{blanchini2015set}
F.~Blanchini and S.~Miani, \emph{Set-theoretic methods in control}.\hskip 1em
  plus 0.5em minus 0.4em\relax Birkh{\"a}user, 2015.

\bibitem{agarwal2000difference}
R.~P. Agarwal, \emph{Difference equations and inequalities: {T}heory, methods,
  and applications}.\hskip 1em plus 0.5em minus 0.4em\relax CRC Press, 2000.

\bibitem{schulze2014numerical}
M.~S. Darup, ``Numerical methods for the investigation of stabilizability of
  constrained systems,'' Ph.D. dissertation, Ruhr-Universit\"at Bochum, 2014.

\bibitem{gutman1987algorithm}
P.-O. Gutman and M.~Cwikel, ``An algorithm to find maximal state constraint
  sets for discrete-time linear dynamical systems with bounded controls and
  states,'' \emph{IEEE Transactions on Automatic Control}, vol.~32, no.~3, pp.
  251--254, 1987.

\bibitem{darup2016some}
M.~S. Darup and M.~Cannon, ``Some observations on the activity of terminal
  constraints in linear {MPC},'' in \emph{2016 European Control Conference
  (ECC)}.\hskip 1em plus 0.5em minus 0.4em\relax IEEE, 2016, pp. 770--775.

\end{thebibliography}


\ifTwoColumn
	\begin{IEEEbiography}[{\includegraphics[width=1in,height=1.25in,clip,keepaspectratio]{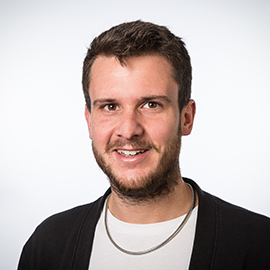}}]{Filippo Fabiani}
		is a post-doctoral Research Assistant in the Control Group at the Department of Engineering Science, University of Oxford, United Kingdom. He received the B.~Sc. degree in Bio-Engineering, the M.~Sc. degree in Automatic Control Engineering, and the Ph.D. degree in Automatic Control (cum laude), all from the University of Pisa, in 2012, 2015, and 2019 respectively. In 2017-2018, he visited the Delft Center for Systems and Control at TU Delft, where in 2018-2019 he was post-doctoral Research Fellow.

		His research interests include game theory and data-driven methods for optimization and control of uncertain systems, with applications in smart grids, traffic networks and automated driving.
	\end{IEEEbiography}

	\begin{IEEEbiography}[{\includegraphics[trim = 0 0.4in 0 0.2in, width=1in,height=1.25in,clip,keepaspectratio]{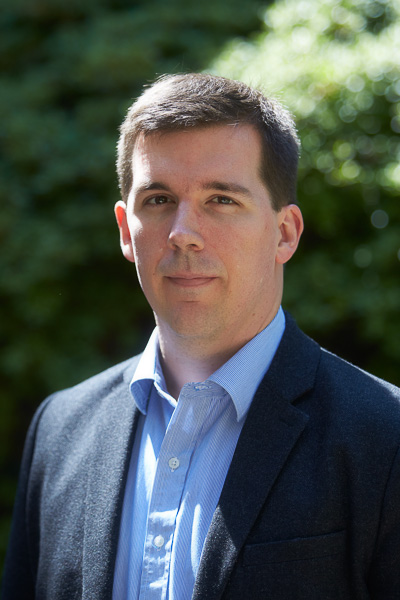}}]{Paul Goulart}
		received the B.Sc. and M.Sc. degrees in aeronautics and astronautics from the Massachusetts Institute of Technology, Cambridge, MA, USA, in 1998 and 2001, respectively, and the Ph.D. degree in control engineering in 2007 from the University of Cambridge, Cambridge, U.K., where he was selected as a Gates Scholar.

		From 2007 to 2011, he was a Lecturer in control systems with the Department of Aeronautics, Imperial College London, and from 2011 to 2014, a Senior Researcher with the Automatic Control Laboratory, ETH Zürich. He is currently an Associate Professor with the Department of Engineering Science, and a Tutorial Fellow with St. Edmund Hall, University of Oxford, Oxford, U.K. His research interests include model predictive control, robust optimization, and control of fluid flows.
	\end{IEEEbiography}

	\vfill\null 
\fi

\end{document}